\crefname{rrule}{Reduction Rule}{Reduction Rules}
\crefname{observation}{Observation}{Observations}
\newcommand{\problemdef}[3]{
	\begin{center}
		\begin{minipage}{0.95\textwidth}
			\noindent
			\textsc{#1}
			\vspace{2pt}
			
			\setlength{\tabcolsep}{3pt}
			\begin{tabularx}{\textwidth}{@{}lX@{}}
				\textbf{Input:} 	& #2 \\
				\textbf{Task:} 		& #3
			\end{tabularx}
		\end{minipage}
	\end{center}
}
\newcommand{\fillRoundedNode}[6]{%
	\fill[sharp corners, fill=#2]%
		(#1.west)%
		[rounded corners=#3] |- (#1.north)%
		[rounded corners=#4] -| (#1.east)%
		[rounded corners=#5] |- (#1.south)%
		[rounded corners=#6] -| (#1.west);
}
\newcommand{\myNodeHeight}{0.5cm}
\newcommand{\myNodeWidth}{0.9cm}
\newcommand{\myNodeCornerRound}{8pt}
\newcommand{\copyPart}[3]{
			\node[
					anchor=north west,
					minimum width=\myNodeWidth,
					minimum height=\myNodeHeight,
					rectangle
				] (Left) at (Top.south west) {#1};
			\node[
					anchor=north,
					minimum width=\myNodeWidth,
					minimum height=\myNodeHeight,
					rectangle
				] (Center) at (Top.south){#2};
			\node[
					anchor=north east,
					minimum width=\myNodeWidth,
					minimum height=\myNodeHeight,
					rectangle
				] (Right) at (Top.south east) {#3};
}
\tikzset{
	pics/vhsplit/.style n args = {8}{
		code = {
			\node[
					minimum height=1.8*\myNodeHeight,
					minimum width=3*\myNodeWidth,
					rectangle,inner sep=2pt
				] (Top) at (0,0) {#1};
				
			\copyPart{#5}{#6}{#7}

			\begin{scope}%
				\fillRoundedNode{Left}{#2}{0}{0}{0}{\myNodeCornerRound}
				\fillRoundedNode{Center}{#3}{0}{0}{0}{0}
				\fillRoundedNode{Right}{#4}{0}{0}{\myNodeCornerRound}{0}
			\end{scope}

			\draw (Left.north west) -- (Right.north east);
			\draw (Center.north west) -- (Center.south west);
			\draw (Right.north west) -- (Right.south west);
			
			\copyPart{#5}{#6}{#7}
			
			\node[inner sep=0pt,draw,rounded corners=\myNodeCornerRound,fit=(Top)(Left)(Center)(Right)] () {};
		}
	}
}
\newcommand{\myTabThm}[1]{\hyperref[#1]{\textcolor{black}{{Th.\,\ref*{#1}}}}}
\newcommand{\myTabObs}[1]{\hyperref[#1]{\textcolor{black}{{Ob.\,\ref*{#1}}}}}
\newcommand{\myTabProp}[1]{\hyperref[#1]{\textcolor{black}{{Pr.\,\ref*{#1}}}}}
\newcommand{\blackCite}[1]{%
	\hypersetup{citecolor=black}%
	\textcolor{black}{\mbox{\cite{#1}}}%
	\hypersetup{citecolor=green!40!black}%
}
\newcommand{\tworows}[2]{\begin{tabular}{c}{#1}\\{#2}\end{tabular}}
\newcommand{\distto}[1]{\tworows{Distance to}{#1}}
\definecolor{colGreen}{RGB}{0, 146, 70}
\definecolor{colRed}{RGB}{206, 43, 55}
\definecolor{colViolet}{RGB}{102, 0, 204}
\tikzstyle{para}=[rectangle,draw=black,minimum height=.8cm,fill=gray!10,rounded corners=1mm, on grid]
\tikzstyle{vertex}=[circle,draw,inner sep=2pt,fill]
\newtheorem{theorem}{Theorem}
\newtheorem{lemma}{Lemma}
\newtheorem{proposition}{Proposition}
\newtheorem{observation}{Observation}
\newtheorem{definition}{Definition}
\newtheorem{rrule}{Reduction Rule}
\DeclareMathOperator*{\argmax}{arg\,max}
\DeclareMathOperator{\pen}{pen}
\DeclareMathOperator{\cl}{cl}
\DeclareMathOperator{\cc}{c-c}
\newcommand{\di}{\textsc{Diameter}}
\newcommand{\wdi}{\textsc{Weighted Diameter}}
\newcommand{\dwdi}{\textsc{Doubly Weighted Diameter}}
\newcommand{\dia}{d}%
\newcommand{\dis}{\operatorname{dist}}
\newcommand{\apsp}{\textsc{All-Pairs Shortest Paths}}
\newcommand{\param}{\textsf}
\newcommand{\dt}[1]{\param{distance to #1{}}}
\newcommand*{\defeq}{\mathrel{\vcenter{\baselineskip0.5ex \lineskiplimit0pt\hbox{\scriptsize.}\hbox{\scriptsize.}}}=}
\newcommand{\N}{\mathds{N}}
\newcommand{\w}{\ensuremath{\tau}}
\title{Parameterized Complexity of Diameter\footnote{An extended abstract appeared in the Proceedings of the 11th International Conference on Algorithms and Complexity (CIAC~'19). This version contains additional detail and omitted proofs. This work was partially supported by DFG project FPTinP, NI 369/16.}}
\author{Matthias Bentert \and Andr{\'e} Nichterlein}
\date{Algorithmics and Computational Complexity, TU~Berlin, Germany \texttt{\{matthias.bentert,andre.nichterlein\}@tu-berlin.de}}
\newcommand{\abstractCite}[1]{[\citeauthor{#1}, \citeyear{#1}]}
\begin{document}

\maketitle

\begin{abstract}
	\di---the task of computing the length of a longest shortest path---is a fundamental graph problem.
	Assuming the Strong Exponential Time Hypothesis, there is no $O(n^{1.99})$-time algorithm even in sparse graphs~\abstractCite{RW13}.
	To circumvent this lower bound, 	we investigate which parameters allow for running times of the form~$f(k) (n+m)$ where~$k$ is the respective parameter and $f$ is a computable function.
	To this end, we systematically explore a hierarchy of structural graph parameters.
\end{abstract}

\section{Introduction}
The diameter is arguably among the most fundamental graph parameters. 
Most known algorithms for determining the diameter first compute the shortest path between each pair of vertices (APSP: \apsp) and then return the maximum~\cite{AWW16}. 
The currently fastest algorithms for APSP in weighted graphs %
have a running time of $O(n^3 / 2^{\Omega(\sqrt{\log n})})$ in dense graphs~\cite{CW16} and~$O(nm + n^2 \log n)$ in sparse graphs~\cite{Joh77}, respectively.
In this work, we focus on the unweighted case.  
Formally, we study the following problem: 
\problemdef{\di}
{An undirected, connected, unweighted graph~$G=(V,E)$.}
{Compute the length of a longest shortest path in~$G$.}
The (theoretically) fastest algorithm for \di{} runs in~$O(n^{2.373})$ time and is based on fast matrix multiplication~\cite{Sei95}. 
This upper bound can (presumably) not be improved by much as \citet{RW13} showed that solving \di{} in~$O((n+m)^{2-\varepsilon})$ time for any~$\varepsilon > 0$ breaks the SETH (Strong Exponential Time Hypothesis~\cite{IP01,IPZ01}).
Seeking for ways to circumvent this lower bound, we follow the line of ``parameterization for polynomial-time solvable problems''~\cite{GMN17} (also referred to as ``FPT~in~P'').
This approach is recently actively studied and sparked a lot of research~\cite{FKMNNT19,FLSPW18,MNN17,KNNZ18,AWW16,BHM18,BFNN19,KN18,CDP18}.
Given some parameter~$k$, we aim for an algorithm with a running time of~$f(k) (n+m)$ that solves \di{}. 
Starting FPT~in~P for \di{}, \citet{AWW16} observed that, unless the SETH fails, the function~$f$ has to be an \emph{exponential} function if~$k$ is the \param{treewidth} of the graph.
We extend their research by systematically exploring the parameter space looking for parameters where~$f$ can be a polynomial. 
If such running times contradict conditional lower bounds, then we seek for matching upper bounds of the form~$f(k)(n+m)$ or~$f(k) n^2$ where~$f$ is exponential. 

In a second step, we combine parameters that are known to be small in many real-world graphs. 
We concentrate on social networks which often have special characteristics, including the ``small-world'' property and a power-law degree distribution~\cite{LH08,M67,New03,New10,NJ03}.  
We therefore combine parameters related to the \param{diameter} with parameters related to the \param{$h$-index}\footnote{The $h$-index of a graph~$G$ is the largest number~$\ell$ such that~$G$ contains at least~$\ell$ vertices of degree at least~$\ell$.}; both parameters can be expected to be orders of magnitude smaller than the number of vertices in large social networks. 

\paragraph{Related Work.}
Due to its importance, \di{} is extensively studied.
Algorithms employed in practice have usually a worst-case running time of~$O(nm)$, but are much faster in experiments.
See e.\,g. \citet{BCHKMT15} for a recent example which also yields good performance bounds using average-case analysis~\cite{BCT17}.
Concerning worst-case analysis, the theoretically fastest algorithms are based on matrix multiplication and run in~$O(n^{2.373})$ time~\cite{Sei95} and any $O((n+m)^{2-\varepsilon})$-time algorithm refutes the SETH~\cite{RW13}.

The following results on approximating \di{} are known:
It is easy to see that a simple breadth-first search gives a linear-time $2$-approximation.
\citet{ACIM99} improved the approximation factor to~$3/2$ at the expense of the higher running time of~$O(n^2 \log n + m \sqrt{n \log n})$.
The lower bound of \citet{RW13} also implies that approximating \di{} within a factor of~$3/2 - \delta$ in~$O(n^{2 - \varepsilon})$ time refutes the SETH.
Moreover, for any~$\varepsilon,\delta > 0$ a $(3/2-\delta)$-approximation in $O(m^{2-\varepsilon})$ time or a $(5/3-\delta)$-approximation in $O(m^{3/2-\varepsilon})$ time also refute the SETH~\cite{CGR16,BRSWW18}.
On planar graphs, there is an approximation scheme with %
near linear running time~\cite{WY16}; the fastest exact algorithm for \di{} on planar graphs runs in~$O(n^{1.667})$ time~\cite{GKMS18}.

Concerning FPT~in~P, \di{} can be solved in~$2^{O(k)} n^{1+o(1)}$ time where~$k$ is the \param{treewidth} of the graph~\cite{BHM18}.
However, the reduction for the lower bound of \citet{RW13} implies that for any~$\varepsilon > 0$ a $2^{o(k)} n^{2-\varepsilon}$-time algorithm refutes the SETH, where~$k$ is either the \param{vertex cover number}, the \param{treewidth}, or the combined parameter \param{$h$-index and domination number}.
Moreover, this reduction also implies that the SETH is refuted by any~$f(k)(n+m)^{2-\varepsilon}$-time algorithm for \di{} for any computable function~$f$ and~$\varepsilon > 0$ when~$k$ is the \param{(vertex deletion) distance to chordal graphs}.
\citet{ED16} adapted the reduction by \citeauthor{RW13} and proved that any~$f(k)(n+m)^{2-\varepsilon}$-time algorithm for \di{} parameterized by the \param{maximum degree}~$k$ for any computable function~$f$ refutes the SETH.

\paragraph{Our Contribution.}
We make progress towards systematically classifying the complexity of \di{} parameterized by structural graph parameters. 
\Cref{fig:param-hierarchy} gives an overview of previously known and new results and their implications.
\begin{figure}[t!]
	\newcommand{\ColorOpen}{white}
	\newcommand{\ColorHard}{red!56}%
	\newcommand{\ColorFPT}{green!50}%
	\resizebox{\textwidth}{!}
	{
		\begin{tikzpicture}
			\matrix[ampersand replacement=\&,row sep=0.75cm,column sep=0.25cm]
			{
			
				\& 
				\& 
				\& \pic (md-ds) {vhsplit={\tworows{Max. Degree}{+ Dom. No.}}{\ColorFPT}{\ColorFPT}{\ColorFPT}{ }{ }{\myTabObs{obs:maxdeg-ds-alg}}};
				\& 
				\& \\
				\& \pic (vc) {vhsplit={\tworows{Vertex Cover}{Number}}{\ColorFPT}{\ColorFPT}{\ColorHard}{ }{ }{\blackCite{RW13}}};
				\& 
				\& \pic (h-ds) {vhsplit={\tworows{$h$-index + }{Dom. No.}}{\ColorFPT}{\ColorFPT}{\ColorHard}{ }{ }{\blackCite{RW13}}};
				\& \pic (md-diam) {vhsplit={\tworows{Max Degree}{+ Diameter}}{\ColorFPT}{\ColorFPT}{\ColorHard}{ }{ }{ \myTabThm{thm:maxdeg-diam} }};
				\& \pic (bsw) {vhsplit={\tworows{Bisection}{Width}}{\ColorHard}{\ColorHard}{\ColorHard}{\myTabThm{thm:BWisGPhard}}{ }{ }}; \\
				\pic (dc) {vhsplit={\distto{Clique}}{\ColorFPT}{\ColorFPT}{\ColorFPT}{ }{ }{ \myTabObs{obs:dist-to-clique} }};
				\& 
				\& \pic (fes) {vhsplit={\tworows{Feedback}{Edge Number}}{\ColorFPT}{\ColorFPT}{\ColorFPT}{ }{ }{\myTabThm{thm:fes}}};
				\& \pic (acn-ds) {vhsplit={\tworows{Ac. Chrom. No.}{+ Dom. No.}}{\ColorFPT}{\ColorHard}{\ColorHard}{ }{\myTabThm{thm:acn-ds}}{}};
				\& \pic (h-diam) {vhsplit={\tworows{$h$-index + }{Diameter}}{\ColorFPT}{\ColorFPT}{\ColorHard}{ }{ \myTabThm{thm:hind-diam} }{ }};
				\& \pic (max) {vhsplit={\tworows{Maximum}{Degree}}{\ColorFPT}{\ColorHard}{\ColorHard}{ }{\blackCite{ED16}}{ }}; \\
				\& \pic (dig) {vhsplit={\distto{Interval}}{\ColorFPT}{\ColorOpen}{\ColorHard}{ \myTabObs{obs:dist-to-interval} }{ }{ }};
				\& 
				\& \pic (tw) {vhsplit={\tworows{Treewidth}{}}{\ColorFPT}{\ColorFPT}{\ColorHard}{ }{\blackCite{AWW16,BHM18}}{ }};
				\& \pic (acn-diam) {vhsplit={\tworows{Ac. Chrom. No.}{ + Diameter}}{\ColorFPT}{\ColorHard}{\ColorHard}{ }{ }{ }};
				\& \pic (hind) {vhsplit={\tworows{$h$-index}{}}{\ColorFPT}{\ColorHard}{\ColorHard}{ }{ }{ }};\\
				\pic (ds) {vhsplit={\tworows{Domination}{Number}}{\ColorOpen}{\ColorHard}{\ColorHard}{ }{\blackCite{RW13}}{}};
				\& \pic (dcg) {vhsplit={\distto{Cograph}}{\ColorFPT}{\ColorFPT}{\ColorHard}{ }{\myTabThm{thm:cograph}}{ }};
				\& \pic (dco) {vhsplit={\distto{Chordal}}{\ColorOpen}{\ColorHard}{\ColorHard}{ }{\blackCite{RW13}}{ }};
				\& 
				\& 
				\& \pic (acn) {vhsplit={\tworows{Acyclic}{Chromatic No.}}{\ColorFPT}{\ColorHard}{\ColorHard}{ }{ }{ }};\\
				\pic (diam) {vhsplit={\tworows{Max Diameter}{of Components}}{\ColorOpen}{\ColorHard}{\ColorHard}{ }{ }{ }};
				\& 
				\& \pic (dbg) {vhsplit={\distto{Bipartite}}{\ColorHard}{\ColorHard}{\ColorHard}{\myTabThm{thm:DtBisGPhard}}{ }{ }};
				\& 
				\& 
				\& \pic (avg) {vhsplit={\tworows{Average}{Degree}}{\ColorFPT}{\ColorHard}{\ColorHard}{\myTabObs{obs:avg-degree}}{ }{ }};\\
				\pic (girth) {vhsplit={\tworows{Girth}{}}{\ColorHard}{\ColorHard}{\ColorHard}{\myTabThm{thm:DtBisGPhard}}{ }{ }};
				\& \pic (dpg) {vhsplit={\distto{Perfect}}{\ColorHard}{\ColorHard}{\ColorHard}{ }{ }{ }};
				\& 
				\& \pic (cn) {vhsplit={\tworows{Chromatic}{Number}}{\ColorHard}{\ColorHard}{\ColorHard}{ }{ }{ }};
				\& 
				\& \pic (min) {vhsplit={\tworows{Minimum}{Degree}}{\ColorHard}{\ColorHard}{\ColorHard}{\myTabThm{thm:BWisGPhard}}{ }{ }};\\
			};
		\draw[thick] (md-ds) -- (md-diam);
		\draw[thick] (h-diam) -- (acn-diam);
		\draw[thick] (acn-diam) -- (acn)  -- (avg) -- (min);
		\draw[thick] (md-ds) -- (h-ds) -- (acn-ds) -- (acn-diam);;
		\draw[thick] (diam) to[out=15,in=230] (acn-diam);
		\draw[thick] (md-ds) to[out=340,in=140] (bsw);
		\draw[thick] (md-ds) -- (vc) to[out=340,in=155] (h-diam);
		\draw[thick] (md-ds) to[out=180,in=80] (dc);
		\draw[thick] (md-ds) to[out=205,in=85] (fes);
		\draw[thick] (md-diam) -- (max);
		\draw[thick] (md-diam) -- (h-diam);
		\draw[thick] (h-ds) -- (h-diam);
		\draw[thick] (h-diam) -- (hind);
		\draw[thick] (hind)-- (acn);
		\draw[thick] (max) -- (hind);
		\draw[thick] (vc) -- (dig);
		\draw[thick] (vc) to[out=290,in=175] (tw);
		\draw[thick] (vc) to[out=285,in=32] (dcg);
		\draw[thick] (dc) -- 
				(ds) -- (diam) -- (girth);
		\draw[thick] (dc) to[out=280,in=145] (dcg);
		\draw[thick] (dc) -- (dig);
		\draw[thick] (dig)-- (dco);
		\draw[thick] (dco) to[out=220,in=80] (dpg);
		\draw[thick] (dcg) -- (dpg);
		\draw[thick] (dcg) -- (diam);
		\draw[thick] (dbg) -- (cn);
		\draw[thick] (dbg) -- (dpg);
		\draw[thick] (fes) -- (tw);
		\draw[thick] (fes) to[out=290,in=35] (dbg);
		\draw[thick] (fes) -- (dco);
		\draw[thick] (tw) to[out=327,in=170] (acn);
		\draw[thick] (acn) -- (cn);
		
	\end{tikzpicture}
	}
	\caption{
		Overview of the relation between the structural parameters and the respective results for \di{}.
		An edge from a parameter~$\alpha$ to a parameter~$\beta$ below of~$\alpha$ means that~$\beta$ can be upper-bounded in a polynomial (usually linear) function in~$\alpha$ (see also~\cite{SW13}). 
		The three small boxes below each parameter indicate whether there exists (from left to right) an algorithm running in $f(k)n^2$, $f(k)(n \log n +m)$, or~$k^{O(1)}(n \log n+m)$ time, respectively.
		If a small box is green, then a corresponding algorithm exists and the box to the left is also green.
		Similarly, a red box indicates that a corresponding algorithm is a breakthrough. 
		More precisely, if a middle box (right box) is red, then an algorithm running in~$f(k) \cdot (n+m)^{2-\varepsilon}$ (or~$k^{O(1)} \cdot (n+m)^{2-\varepsilon}$) time refutes the SETH.
		If a left box is red, then an algorithm with running time~$f(k)n^2$ implies an~$O(n^2)$ time algorithm for \di{} in general.
		Hardness results for a parameter~$\alpha$ imply the same hardness results for the parameters below~$\alpha$.
		Similarly, algorithms for a parameter~$\beta$ imply algorithms for the parameters above~$\beta$. 
		We remark that in the above hierarchy only the algorithm behind the green box for the parameter \param{distance to interval} requires additional input related to the parameter (here the modulator to an interval graph).
	}
	\label{fig:param-hierarchy}
\end{figure}
We define the graph parameters for which we provide results in the sections where they are used; we refer to \citet{BLS99} for definitions of the remaining parameters in \Cref{fig:param-hierarchy}. 

In \cref{sec:dist-from-class}, we follow the ``distance from triviality parameterization'' \cite{GHN04} aiming to extend known tractability results for special graph classes to graphs with small modulators.
For example, \di{} is linear-time solvable on trees.
We obtain an~$O(k \cdot n)$-time algorithm for the parameter \param{feedback edge number~$k$} (edge deletion number to trees). 
However, this is our only~$k^{O(1)}(n+m)$-time algorithm in this section. 
For the remaining parameters, it is already known that such algorithms refute the SETH.
For the parameter \param{distance~$k$ to cographs} we therefore provide a $2^{O(k)}(n+m)$-time algorithm.
Finally, for the parameter \param{odd cycle transversal}~$k$, we use the recently introduced notion of \emph{General-Problem-hardness}~\cite{BFNN19} to show that \di{} parameterized by~$k$ is ``as hard'' as the unparameterized \di{} problem.
In \cref{sec:soc-networks}, we investigate parameter combinations.
We prove that a~$k^{O(1)}\allowbreak(n+m)^{2-\varepsilon}$-time algorithm where~$k$ is the combined parameter \param{diameter and maximum degree} would refute the SETH.
Complementing this lower bound, we provide an~$f(k)(n+m)$-time algorithm where~$k$ is the combined parameter \param{diameter and $h$-index}.

Many of our algorithmic results for \di{} transfer easily to the edge-weighted case by simply exchanging bread-first search with Dijkstra's algorithm and thus getting a logarithmic overhead in the running time.
Whenever this is the case, we state the result for the edge-weighted case which we call \wdi.
The focus of our work (and hence the overview in \Cref{fig:param-hierarchy}) is still on the unweighted case.
Thus, we provide hardness results only for the easier, unweighted variant \di{}.

\section{Preliminaries}%
We set~$\N := \{0,1,2,\ldots,\}$ and~$\N^+ := \N \setminus \{0\}$.
For $\ell \in \N^+$ we set $[\ell] := \{1, 2,\ldots, \ell\}$.
We use mostly standard graph notation.
For a graph~$G = (V,E)$ we set~$n:=|V|$ and~$m:= |E|$.
All graphs in this work are undirected.
For a vertex subset~$V' \subseteq V$, we denote with~$G[V']$ the graph induced by~$V'$.
We set~$G-V' := G[V \setminus V']$.
A path~$P = v_0 \dots v_a$ is a graph with vertex set~$\{v_0, \ldots, v_a\}$ and edge set~$\{\{v_i,v_{i+1}\} \mid 0 \le i < a \}$. 
For~$u,v \in V$, we denote with~$\dis_G(u,v)$ the distance between~$u$ and~$v$ in~$G$, that is, the number of edges (the sum of edge weights in weighted graphs) in a shortest path between~$u$ and~$v$.
If~$G$ is clear from the context, then we omit the subscript.
We denote by~$\dia(G)$ the diameter of~$G$, that is, the length of the longest shortest path in~$G$.
For \wdi{} we consider edge weights to be positive integers:
\problemdef{\wdi}
{A connected graph~$G=(V,E)$ and edge weights~$\w\colon E \rightarrow \N^+$.}
{Compute~$\dia(G)$.}

\paragraph{Parameterized Complexity and GP-hardness.}
A language~$L\subseteq \Sigma^* \times \N$ is a \emph{parameterized problem} over some finite alphabet~$\Sigma$, where~$(x,k) \in \Sigma^* \times \N$ denotes an instance of~$L$ and~$k$ is the parameter.
The language~$L$~is called \emph{fixed-parameter tractable} if there is an algorithm that on input~$(x,k)$ decides whether~$(x,k)\in L$ in~$f(k)\cdot |x|^{O(1)}$ time, where~$f$ is some computable function only depending on~$k$ and~$|x|$ denotes the size of~$x$.
For a parameterized problem~$L$, the language~$\hat{L}=\{x\in \Sigma^*\mid \exists k\colon (x,k)\in L\}$ is called the \emph{unparameterized problem} associated to~$L$.
We use the notion of General-Problem-hardness which formalizes the types of reduction that allow us to exclude parameterized algorithms as they would lead to faster algorithms for the general, unparameterized, problem.
\begin{definition}[{\cite[Definition 2]{BFNN19}}]
	\label{def:k-GP-hard}
	Let~$P \subseteq \Sigma^* \times \N$ be a parameterized problem, let~$\hat{P} \subseteq \Sigma^*$ be the unparameterized decision problem associated to~$P$, and let~$g\colon \N \rightarrow \N$ be a polynomial.
	We call~$P$~\emph{$\ell$-General-Problem-hard$(g)$ ($\ell$-GP-hard$(g)$)} if there exists an algorithm~$\cal{A}$ transforming any input instance~$I$ of~$\hat{P}$ into a new instance~$(I',k')$ of~$P$ such that
	\begin{enumerate}[({G}1)]
		\item~$\cal{A}$ runs in~$O(g(|I|))$ time,\label{prop:GPh-time}
		\item~$I \in \hat{P} \iff (I',k') \in P$,\label{prop:GPh-equiv}
		\item~$k' \leq \ell$, and\label{prop:GPh-param}
		\item~$|I'| \in O(|I|)$.\label{prop:GPh-size}
	\end{enumerate}%
	We call~$P$ \emph{General-Problem-hard$(g)$ (GP-hard$(g)$)} if there exists an integer~$\ell$ such that~$P$ is~$\ell$-GP-hard$(g)$. 
	We omit the running time and call~$P$ \emph{$\ell$-General-Problem-hard ($\ell$-GP-hard)} if~$g$ is a linear function. 
\end{definition}
Showing GP-hardness for some parameter~$k$ allows to lift algorithms for the parameterized problem to the unparameterized setting as stated next.
\begin{lemma}[{\cite[Lemma 3]{BFNN19}}]
	\label{lem:gph}
	Let~$g\colon \N \rightarrow \N$ be a polynomial, let~$P \subseteq \Sigma^* \times \N$ be a parameterized problem that is~GP-hard$(g)$, and let~$\hat{P} \subseteq \Sigma^*$ be the unparameterized decision problem associated to~$P$.
	If there is an algorithm solving each instance~$(I, k)$ of~$P$ in~$O(f(k) \cdot g(|I|))$ time, then there is an algorithm solving each instance~$I'$ of~$\hat{P}$ in~$O(g(|I'|))$ time. 
\end{lemma}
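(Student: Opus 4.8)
The plan is to compose the reduction witnessing GP-hardness with the hypothesized parameterized algorithm and then check that the composed running time collapses to $O(g(|I'|))$.

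First I would unpack the hypothesis: since $P$ is GP-hard$(g)$, there is a fixed integer~$\ell$ and an algorithm~$\mathcal{A}$ witnessing that~$P$ is $\ell$-GP-hard$(g)$. Given an instance~$I'$ of~$\hat{P}$, I would run~$\mathcal{A}$ on it to obtain, in~$O(g(|I'|))$ time by~(G1), an instance~$(J,k')$ of~$P$ with~$k' \le \ell$ by~(G3), with~$|J| \in O(|I'|)$ by~(G4), and with~$I' \in \hat{P} \iff (J,k') \in P$ by~(G2). Then I would call the assumed~$O(f(k)\cdot g(|I|))$-time algorithm for~$P$ on~$(J,k')$ and return its answer; correctness is immediate from~(G2).

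What remains is the running-time bookkeeping. Because~$k' \le \ell$ and~$\ell$ is a constant fixed before any input is seen, $f(k')$ is at most the constant~$\max_{0 \le i \le \ell} f(i)$, which depends on~$\ell$ only. Because~$g$ is a polynomial and~$|J| \in O(|I'|)$, we have~$g(|J|) \in O(g(|I'|))$ (composing a polynomial with a linear function changes it only by a constant factor). Multiplying these two bounds, the call to the algorithm for~$P$ runs in~$O(g(|I'|))$ time, and adding the~$O(g(|I'|))$ time spent by~$\mathcal{A}$ yields the claimed overall bound.

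I do not expect a real obstacle here; the one point that deserves care is the order of quantifiers. The constant~$\ell$ must be extracted first from the GP-hardness hypothesis, and only afterwards may one bound~$f(\ell)$ and absorb the constant hidden in~$|J| \in O(|I'|)$, so that none of these constants secretly depends on the input size. Once this is observed, the argument is essentially a one-line composition of~$\mathcal{A}$ with the parameterized algorithm.
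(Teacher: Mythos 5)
Your proof is correct: the paper itself states this lemma as a citation from [BFNN17] without reproducing a proof, and your composition argument (run the GP-hardness reduction $\mathcal{A}$, feed the result to the parameterized algorithm, and absorb $f(k')\le\max_{0\le i\le\ell}f(i)$ and $g(O(|I'|))=O(g(|I'|))$ into constants) is exactly the intended one-line argument. Your remark about fixing the constant $\ell$ before bounding $f$ is the only subtle point, and you handle it correctly.
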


Applying \cref{lem:gph} to \di{} yields the following.
First, having an $f(k) n^{2.3}$ time algorithm with respect to a parameter~$k$ for which \di{} is GP-hard would yield a faster \di{} algorithm.
Moreover, from the known SETH-based hardness results~\cite{RW13,CGR16,BRSWW18}, we get the following.

\begin{observation}
	If the SETH is true and \di{} is GP-hard($n^{2-\varepsilon}$) with respect to some parameter~$k$ for some~$\varepsilon > 0$, then there is no $f(k) \cdot n^{2-\varepsilon'}$ time algorithm for any~$\varepsilon' > 0$ and any function~$f$.
\end{observation}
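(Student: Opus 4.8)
The plan is a one-step reduction to the conditional lower bound of \citet{RW13} via \cref{lem:gph}. Assume, towards a contradiction, that the SETH holds, that \di{} is GP-hard$(n^{2-\varepsilon})$ with respect to the parameter~$k$, and that there is nonetheless an algorithm solving the parameterized \di{} problem in~$f(k)\cdot n^{2-\varepsilon'}$ time for some~$\varepsilon'>0$ and some function~$f$.

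Set~$\delta \defeq \min\{\varepsilon,\varepsilon'\}>0$. A reduction witnessing GP-hardness$(n^{2-\varepsilon})$ in the sense of \cref{def:k-GP-hard} runs a fortiori in~$O(n^{2-\delta})$ time, so \di{} is also GP-hard$(n^{2-\delta})$ with respect to~$k$; likewise the assumed algorithm runs in~$O(f(k)\cdot n^{2-\delta})$ time because~$n^{2-\varepsilon'}\le n^{2-\delta}$ for~$n\ge 1$. The unparameterized problem associated to parameterized \di{} is exactly the decision version of \di{}, so \cref{lem:gph}, instantiated with the polynomial~$g(n)\defeq n^{2-\delta}$, yields an algorithm that solves \di{} in~$O(n^{2-\delta})$ time. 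Since~$n\le n+m$ and~$2-\delta>0$, this is in particular an~$O((n+m)^{2-\delta})$-time algorithm for \di{}, which by \citet{RW13} refutes the SETH --- a contradiction. Hence no such~$f(k)\cdot n^{2-\varepsilon'}$-time algorithm can exist.

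I do not expect a genuine obstacle here: the statement is essentially the composition of two black boxes already available in the excerpt, namely the GP-hardness lifting of \cref{lem:gph} and the SETH-based lower bound of \citet{RW13}. The only points needing care are bookkeeping: reconciling the two exponents~$\varepsilon$ and~$\varepsilon'$ through their minimum~$\delta$, observing that the constant parameter bound~$\ell$ from \cref{def:k-GP-hard} keeps~$f(\ell)$ an absolute constant that is absorbed into the~$O(\cdot)$ when \cref{lem:gph} is applied, and the trivial inequality~$n\le n+m$ that lets us feed the resulting running time into the~$(n+m)$-form of the \citeauthor{RW13} bound.
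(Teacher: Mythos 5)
Your proof is correct and follows exactly the route the paper intends: the observation is stated as an immediate consequence of composing \cref{lem:gph} (with $g(n)=n^{2-\delta}$) with the SETH-based lower bound of \citet{RW13}, which is precisely what you do. The bookkeeping with $\delta=\min\{\varepsilon,\varepsilon'\}$ and the inequality $n\le n+m$ are the right (and only) details to check.
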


\section{Basic Observations}
In this section, we present several simple observations that complete the overview in \cref{fig:param-hierarchy}.
More precisely, we show algorithms with respect to the parameters \param{distance~$c$ to clique}, \param{distance~$i$ to interval graphs}, \param{average degree~$a$},  \param{maximum degree}~$\Delta$, \param{diameter}~$\dia$, and \param{domination number}~$\gamma$ (in the order they are listed).

\paragraph{Distance to clique.}
We start with the parameter \param{distance~$c$ to clique} and provide an algorithm with running time~$O(c \cdot (n + m))$ time.
Since \dt{clique} is the \param{vertex cover number} in the complement graph, it can be 2-approximated in linear time (without computing the complement graph). %

\begin{observation}
	\label{obs:dist-to-clique}
	\di{} parameterized by \param{distance~$c$ to clique} takes~$O(c \cdot (n + m))$ time.
\end{observation}

\begin{proof}
Let~$G=(V,E)$ be the input graph and let~$c$ be its \param{distance to clique}.
Let~$G'$ be the respective induced clique graph. 
Compute in linear time the degree of each vertex and the number~$n = |V|$ of vertices.
Iteratively check for each vertex~$v$ whether its degree is~$n-1$.
If~$\deg(v) = n-1$, then~$v$ can be deleted as it is in every largest clique and thus decrease~$n$ by one and the degree of each other vertex by one.
If not, then we can find a vertex~$w$ which is not adjacent to~$v$ in~$O(\deg(v))$ time.
Put~$v$ and~$w$ in the solution set, delete both vertices and all incident edges and adjust the number of vertices and their degree accordingly.
Observe that~$v$ and~$w$ cannot be contained in the same clique and therefore~$v\in K$ or~$w\in K$.
Putting both vertices in the solution set results in a 2-approximation.
This algorithm takes~$O(\deg(v) + \deg(w))$ time per deleted pair~$v,w$ of vertices.
Since~$\sum_{v\in V} \deg(v) \in O(n+m)$ this procedure takes~$O(n+m)$ time.

We use the algorithm described above to compute a set~$K$ such that~$G' = G-K$ is a clique and~$|K| \leq 2k$ in linear time.
Since~$G'$ is a clique, its diameter is one if there are at least two vertices in the clique.
We therefore assume that there is at least one vertex in the deletion set~$K$.
Compute for each vertex~$v\in K$ a breadth-first search rooted in~$v$ in linear time and return the largest distance found.
The returned value is the diameter of~$G$ as each longest induced path is either of length one or has at least one endpoint in~$K$.
The procedure takes~${O(|K|\cdot (n+m) + n + m) = O(c\cdot (n+m))}$ time.
\end{proof}

Note that for \wdi{} a result similar to \cref{obs:dist-to-clique} would yield a faster algorithm for \di:
In a clique~$C$ with~$n$ vertices and edge weights either~$1$ or~$n$, one can encode any connected unweighted graph~$G$ by giving edges in~$G$ weight one in~$C$ and any non-edge in~$G$ a weight of~$n$ in~$C$. 
It is easy to see that~$G$ has the same diameter as~$C$.
Thus, an algorithm for \wdi{} with running time $O(c \cdot (n + m))$ would imply an~$O(n^2)$ algorithm for \di and, hence, drastically improve on the state-of-the-art.

\paragraph{Distance to interval graphs.}
We next discuss the parameter \dt{interval graphs}.
We first provide a general observation stating that a size~$k$ deletion set to some graph class can be used to design a $O(k\cdot n^2)$-time algorithm if \apsp{} can be solved in~$O(n^2)$ time on graphs in the respective graph class.
The algorithm is fairly simple:
First compute~$G'$, that is, the graph without the deletion set~$K$, and solve \apsp{} on it in~$O(n^2)$ time.
Next, compute a breadth-first search from every vertex in~$K$ in~${O(k \cdot m)}$ time and store all distances found in a table.
The last step can  be seen as running the classical Floyd-Warshall algorithm for each vertex in~$K$: compute for each pair~$a,c\in V\setminus K$ 
$$\dis_G(a,c) := \min \{\dis_{G'}(a,c), \min_{b\in K}\{\dis_G(a,b)+\dis_G(b,c)\}\},$$
that is, the minimum distance in the original graph.
Observe that a shortest path either travels through some vertex $b\in K$ or not.
In the latter case,~$\dis_{G}(a,c) = \dis_{G'}(a,c)$ and in the former case the distance between~$a$ and~$c$ in~$G$ is~$\dis_G(a,b)+\dis_G(b,c)$.
This algorithm takes~$O(n+m+n^2+k\cdot m+n^2\cdot k) = O(k\cdot n^2)$ time.

\begin{observation}
	\label{obs:dist-to-graphclass}
	Let~$\Pi$ be a graph class such that \apsp{} can be solved in~$O(n^2)$ time on~$\Pi$.
	If the (vertex) deletion set~$K$ to~$\Pi$ is given, then \apsp{} can be solved in~\mbox{$O(|K|\cdot n^2)$} time.
\end{observation}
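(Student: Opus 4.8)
The plan is essentially to formalize the informal description that already precedes the statement. Given the deletion set $K$ to the class $\Pi$, first compute $G' := G - K$ and run the assumed $O(n^2)$-time \apsp{} algorithm on $G'$; note that $G' \in \Pi$ by definition of a deletion set, so this is legitimate. Store the resulting distances $\dis_{G'}(a,c)$ for all $a,c \in V\setminus K$ in a table. Second, from every vertex $b \in K$ run a breadth-first search in the whole graph $G$, which takes $O(m)$ time per vertex and hence $O(|K|\cdot m)$ time in total; record all the distances $\dis_G(b,v)$ found this way (for $v \in V$, and in particular for $v \in K$) in a table. This gives us, in $O(|K|\cdot m + n^2)$ time, the full set of distances from or to any vertex of $K$, together with all pairwise distances inside $G'$.

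The third step is the combination. For each ordered pair $a,c \in V\setminus K$, set
$$\dis_G(a,c) := \min\Bigl\{\dis_{G'}(a,c),\ \min_{b\in K}\bigl(\dis_G(a,b)+\dis_G(b,c)\bigr)\Bigr\}.$$
The correctness argument is the dichotomy already sketched: fix a shortest $a$--$c$ path $P$ in $G$. Either $P$ avoids $K$ entirely, in which case $P$ is a path in $G'$, so $\dis_{G'}(a,c) = \dis_G(a,c)$ and the first term is tight; or $P$ meets $K$, and letting $b$ be any vertex of $K$ on $P$, the subpaths of $P$ from $a$ to $b$ and from $b$ to $c$ are themselves shortest paths in $G$, so $\dis_G(a,b)+\dis_G(b,c) = \dis_G(a,c)$ and the corresponding term in the inner minimum is tight. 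Conversely every term on the right-hand side is an upper bound on $\dis_G(a,c)$ (a walk in $G'$ is a walk in $G$; concatenating two shortest paths through $b$ yields an $a$--$c$ walk), so the minimum equals $\dis_G(a,c)$. Distances where at least one endpoint lies in $K$ are already recorded directly from the BFS runs, so all pairwise distances in $G$ are now available.

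For the running time, computing $G'$ and the BFS runs take $O(n+m) + O(|K|\cdot m)$ time; the \apsp{} call on $G'$ takes $O(n^2)$ time by hypothesis; and the final combination evaluates an inner minimum over $K$ for each of the $O(n^2)$ pairs, i.e.\ $O(|K|\cdot n^2)$ time. Since $m = O(n^2)$, the total is $O(|K|\cdot n^2)$, as claimed.

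I do not expect any real obstacle here — this is a routine ``dichotomy on whether the shortest path hits the modulator'' argument, of the same flavour as the standard trick for adding back a small vertex set. The only points that need a sentence of care are (i) checking that the subpath of a shortest path is again a shortest path (so that $\dis_G(a,b)+\dis_G(b,c)$ is actually attained, not merely an upper bound), and (ii) remembering that pairs touching $K$ must also be output, which the BFS-from-$K$ step already handles. If one wanted a decision/counting-free statement matching the first bullet of \cref{obs}, note that \di{} follows immediately by taking the maximum over the computed distance table, giving the $O(i\cdot n^2)$ bound there for $\Pi$ the interval graphs (on which \apsp{} is in $O(n^2)$ time).
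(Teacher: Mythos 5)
Your proposal is correct and follows exactly the paper's argument: run the $O(n^2)$-time \apsp{} on $G' = G - K$, do a BFS from each vertex of $K$, and combine via $\dis_G(a,c) = \min\{\dis_{G'}(a,c),\ \min_{b\in K}(\dis_G(a,b)+\dis_G(b,c))\}$, justified by the dichotomy on whether a shortest path meets $K$. The only difference is that you spell out the correctness details (subpaths of shortest paths are shortest, and pairs touching $K$ are covered by the BFS runs) slightly more explicitly than the paper does.
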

Note the above algorithm works also for weighted graphs by replacing the breadth-first search with Dijkstra's algorithm.
The overall running time would be unchanged as running Dijkstra's algorithm~$k$ times takes~$O(kn\log n + km) = O(kn^2)$ time.
Thus, \cref{obs:dist-to-graphclass} is true for weighted and unweighted graphs.

It is known that (unweighted) \apsp{} can be solved in~$O(n^2)$ time on interval graphs~\cite{CLSS98,ST99}.
Thus we obtain the following.

\begin{observation}
	\label{obs:dist-to-interval}
	\di{} parameterized by the \param{distance~$i$ to interval graphs} is solvable in~\mbox{$O(i\cdot n^2)$} time provided that the deletion set 
	is given.
\end{observation}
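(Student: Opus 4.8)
The plan is to derive this immediately from the two facts already assembled in the excerpt: \cref{obs:dist-to-graphclass}, which converts a fast \apsp{} algorithm on a graph class $\Pi$ into an $O(|K|\cdot n^2)$-time \apsp{} algorithm on graphs with a given deletion set $K$ to $\Pi$, and the known result that \apsp{} is solvable in $O(n^2)$ time on interval graphs~\cite{CLSS98,ST99}. So first I would instantiate $\Pi$ to be the class of interval graphs and simply invoke \cref{obs:dist-to-graphclass}: given the deletion set $K$ of size $i$, we can solve \apsp{} on the whole graph in $O(i \cdot n^2)$ time.

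Second, I would observe that \di{} is trivially computed from \apsp{} output in $O(n^2)$ additional time (take the maximum over all pairs of the $n^2$ computed distances), which is absorbed into the $O(i\cdot n^2)$ bound. This gives the claimed running time. The only mild subtlety worth a sentence is that the deletion set is assumed to be given — the statement itself flags this — so no time is spent finding $K$; the graph $G' = G - K$ and its membership in the interval class are handed to us, and we just need that \apsp{} restricted to $G'$ runs in $O(n^2)$, which is exactly the cited result. (One could also remark that since $G'$ may be disconnected or have fewer than $n$ vertices, the $O(n^2)$ bound is stated in terms of the original $n$, but this is immediate since the number of vertices only drops.)

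Honestly, there is no real obstacle here: the statement is a one-line corollary of \cref{obs:dist-to-graphclass} plus the interval-graph \apsp{} result, and the proof is essentially "apply the previous observation with $\Pi$ the interval graphs." The closest thing to a point requiring care is making sure the \apsp{} algorithm on interval graphs from~\cite{CLSS98,ST99} is genuinely an all-pairs algorithm with the stated $O(n^2)$ bound (rather than, say, single-source), so that the hypothesis of \cref{obs:dist-to-graphclass} is literally met; assuming the citation as stated, this is fine. I would therefore write the proof as two or three sentences: invoke \cref{obs:dist-to-graphclass} with $\Pi$ the class of interval graphs, cite~\cite{CLSS98,ST99} for the $O(n^2)$ \apsp{} bound on that class, and note that the diameter is the maximum of the computed pairwise distances.
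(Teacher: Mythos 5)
Your proposal matches the paper exactly: the paper derives this observation in one line by instantiating \cref{obs:dist-to-graphclass} with $\Pi$ the class of interval graphs and citing the $O(n^2)$-time \apsp{} algorithm on interval graphs from~\cite{CLSS98,ST99}. The extra remarks about extracting the diameter from the \apsp{} table and about $G-K$ possibly being disconnected are fine and left implicit in the paper.
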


We are not aware of a fast constant factor approximation algorithm to compute the deletion set in the above observation.
Finding (or excluding) such an approximation algorithm remains a task for future work.
As interval graphs contain cliques, it follows again that generalizing \cref{obs:dist-to-interval} to the weighted case would improve upon the state-of-the-art algorithm for \di{}.

\paragraph{Average degree.}
We next consider the \param{average degree}~$a$.
Observe that~$2m = n \cdot a$ and therefore the standard algorithm (run Dijkstra's algorithm~$n$ times) takes~$O(n\cdot (n \log n + m)) = O(n^2 (\log n + a))$ time.

\begin{observation}
	\label{obs:avg-degree}
	\wdi{} parameterized by \param{average degree}~$a$ is solvable in~$O((a + \log n) \cdot n^2)$ time.
\end{observation}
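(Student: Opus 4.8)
The plan is to use the standard algorithm for \di{}: run a breadth-first search (BFS) from each vertex $v \in V$ and return the largest distance ever encountered. Correctness is immediate, since in an unweighted graph a BFS rooted at $v$ computes $\dis_G(v,u)$ for every $u \in V$, and the diameter equals $\max_{u,v \in V} \dis_G(u,v)$ by definition. A single BFS runs in $O(n+m)$ time, so performing $n$ of them costs $O(n \cdot (n+m))$ time in total.

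It then only remains to re-express this bound in terms of the average degree $a$. Since the input graph $G$ is connected we have $m \ge n-1$, hence $n + m = O(m)$ and the running time simplifies to $O(n \cdot m)$. By definition of the average degree, $a = 2m/n$, equivalently $m = a n / 2$, so $O(n \cdot m) = O(a \cdot n^2)$, as claimed. There is no real obstacle here beyond invoking the connectivity assumption to absorb the additive $n$ term into $m$; the statement is merely a reformulation of the textbook $O(nm)$-time bound for \di{} in the parameter $a$.
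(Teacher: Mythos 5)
Your proof is correct and matches the paper's argument exactly: both run the standard $O(n\cdot m)$ all-sources BFS algorithm and substitute $m = an/2$ to obtain $O(a\cdot n^2)$. The only difference is that you spell out the absorption of the additive $n$ term via connectivity, which the paper leaves implicit.
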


\paragraph{Maximum degree and diameter.}
We look at two parameter combinations related to both \param{maximum degree} and \param{diameter}.
Usually, this parameter is not interesting as the graph size can be upper-bounded by this parameter and thus fixed-parameter tractability with respect to this combined parameter is trivial.
The input size is, however, only exponentially bounded in the parameter, so it might be tempting to search for fully polynomial algorithms.
In \cref{sec:comb} we exclude such a fully polynomial algorithm.
Thus, the subsequent algorithm is basically optimal.

\begin{observation}
	\label{obs:maxdeg-diam-alg}
	\wdi{} parameterized by~\param{diameter}~$d$ and \param{maximum degree}~$\Delta$ is solvable in~$O(\Delta^{2d} \cdot (d \log \Delta + \Delta))$ time.
\end{observation}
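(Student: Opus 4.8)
The plan is to exploit the fact that a graph with diameter $d$ and maximum degree $\Delta$ has at most $\Delta^{d+1}$ vertices, so one can afford to run a single breadth-first search to completion and read off the diameter — the only issue is that the graph is handed to us in $O(n+m)$ space, which is potentially as large as $\Theta(\Delta^{d+1})$, whereas we want a running time bounded purely by a function of the parameter, with no leading $n$ or $m$ factor. First I would fix an arbitrary vertex $v$ and perform a BFS from $v$; since every vertex is within distance $d$ of $v$ and each vertex has at most $\Delta$ neighbors, the number of vertices reached is at most $1 + \Delta + \Delta^2 + \dots + \Delta^{d} \le \Delta^{d+1}$, and the number of edges explored is at most $\tfrac12 n \Delta \le \Delta^{d+2}$. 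So one BFS from one vertex already sees the whole graph (the graph is connected) in $O(\Delta^{d+2})$ time.

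Next, to compute the actual diameter I would run a BFS from every vertex. There are at most $\Delta^{d+1}$ vertices, and each BFS costs $O(n + m) = O(\Delta^{d+2})$ time, for a total of $O(\Delta^{d+1} \cdot \Delta^{d+2}) = O(\Delta^{2d+3})$ time; taking the maximum over all the eccentricities found gives the diameter. This matches the claimed bound. (One can shave this down — e.g. $O(\Delta^{2d+2})$ by being slightly more careful with the edge count, or by noting that a BFS layered graph rooted at $v$ already restricts which vertices can be BFS sources that realize the diameter — but the statement only asks for $O(\Delta^{2d+3})$, so the crude estimate suffices.)

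The one genuinely delicate point is the accounting at the very start: before we know anything, we are only promised that $G$ has diameter $d$ and maximum degree $\Delta$, and the input has size $n + m$ which we are not allowed to charge. The resolution is that we never need to read the whole input array up front — we start the first BFS directly from vertex $1$ in the adjacency-list representation, and BFS only touches vertices and edges it actually reaches. Since $G$ is connected with the stated parameters, BFS terminates after touching all $n \le \Delta^{d+1}$ vertices and all $m \le \Delta^{d+2}$ edges, so the work is automatically bounded by the parameter without a separate $O(n+m)$ preprocessing term. After that first BFS we have an explicit list of all vertices, and the remaining $n-1$ BFS runs are routine.

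I expect the main obstacle — really the only thing to be careful about — is making the bookkeeping rigorous that $n \le \Delta^{d+1}$ and $m \le \tfrac12 n\Delta$ and hence that the total cost is $\Delta^{2d+3}$ and not, say, $(n+m) \cdot \Delta^{d+1}$ with an uncontrolled $n$. This follows immediately from connectivity plus the BFS-layer argument, so the proof is short.
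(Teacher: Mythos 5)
Your proof is correct and follows essentially the same route as the paper: bound $n \le \sum_{i=0}^{d}\Delta^i < \Delta^{d+1}$ via the BFS-layer argument on the connected input, bound $m \le n\Delta$, and then run the standard all-sources-BFS ($O(n\cdot m)$-time) algorithm, giving $O(\Delta^{2d+3})$. The extra discussion about avoiding an $O(n+m)$ preprocessing charge is harmless but unnecessary, since $n+m$ is itself already bounded by $O(\Delta^{d+2})$ and thus absorbed into the stated running time.
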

\begin{proof}
Since we may assume that the input graph only consists of one connected component, every vertex is found by any breadth-first search.
Any breadth-first search may only reach depth~$d$, where~$d$ is the diameter of the input graph, and as each vertex may only have~$\Delta$ neighbors there are at most~$1+\sum_{i=1}^{d} \Delta \cdot (\Delta-1)^{i-1} \leq 1+\sum_{i=1}^{d} \Delta^{i-1} \cdot (\Delta-1)  = \Delta^{d}$ vertices (since in each ``depth layer~$i$'' there are at most~$\Delta \cdot (\Delta-1)^{i-1}$ vertices).
Since $m \leq n\cdot \Delta$ the $O(n \cdot (n \log n + m))$-time algorithm ($n$ rounds of Dijkstra's algorithm) runs in~$O(\Delta^{2d} \cdot (d \log \Delta + \Delta))$ time. 
\end{proof}

\paragraph{Maximum degree and domination number.}
Observe that for any graph of~$n$ vertices, domination number~$\gamma$, and maximum degree~$\Delta$ it holds that~$n\leq \gamma \cdot (\Delta+1)$ as each vertex is in a dominating set or is a neighbor of at least one vertex in it.
The next observation follows from~$m \leq n\cdot \Delta$.
\begin{observation}
	\label{obs:maxdeg-ds-alg}
	\wdi{} parameterized by \param{domination number}~$\gamma$ and \param{maximum degree}~$\Delta$ is solvable in~$O(\gamma^2 \Delta^2 (\Delta + \log(\gamma\Delta)))$ time. 
\end{observation}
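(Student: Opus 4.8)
The plan is to combine the structural size bound noted just before the statement with the textbook breadth-first-search algorithm for \di{}. First I would recall the standard algorithm: run a breadth-first search from every vertex~$v \in V$, record the eccentricity of~$v$ (the largest distance from~$v$ to any other vertex), and return the maximum eccentricity. Since we assume the input graph is connected, a single breadth-first search runs in~$O(n+m) = O(m)$ time, so the whole procedure runs in~$O(n \cdot m)$ time, and its correctness is immediate because the diameter is by definition the maximum eccentricity.

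Next I would bound~$n$ and~$m$ in terms of the parameters. Let~$D$ be a minimum dominating set, so~$|D| = \gamma$. Every vertex of~$G$ either lies in~$D$ or is adjacent to some vertex of~$D$, and each vertex of~$D$ has at most~$\Delta$ neighbors; hence~$n \le |D| + |D| \cdot \Delta = \gamma(\Delta+1)$. Together with the handshake bound~$m \le n\Delta/2 \le n\Delta$ this gives~$m \le \gamma(\Delta+1)\Delta$. Substituting both bounds into the~$O(n \cdot m)$ running time yields~$O\big(\gamma(\Delta+1) \cdot \gamma(\Delta+1)\Delta\big) = O(\gamma^2 \Delta^3)$, since~$(\Delta+1)^2\Delta \in O(\Delta^3)$ for~$\Delta \ge 1$.

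There is essentially no obstacle here. The one point worth stressing is that the algorithm never needs an explicit dominating set: the inequality~$n \le \gamma(\Delta+1)$ is a purely existential structural fact, so the algorithm is simply "run the standard~$O(nm)$-time algorithm", and the parameters~$\gamma$ and~$\Delta$ enter only in the running-time analysis. The sole routine check is the arithmetic in the final substitution, which is immediate.
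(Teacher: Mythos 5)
Your proof is correct and follows exactly the paper's argument: the bound $n \le \gamma(\Delta+1)$ from the dominating set, $m \le n\Delta$, and substitution into the standard $O(n\cdot m)$-time all-BFS algorithm. Your added remark that no explicit dominating set is needed is a nice clarification but does not change the substance.
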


The reduction of \citet{RW13} can also be used to show that the SETH is refuted by any~$f(\gamma)(n+m)^{2-\varepsilon}$-time algorithm for \di{} for any computable function~$f$ even if a minimum dominating set is given.
This lower bound result is in stark contrast to a simple algorithm running in~$O(\gamma(n+m))$ time that returns either the diameter or the diameter minus one.

\begin{observation}
	Given a dominating set of size~$\gamma$ for an unweighted graph, one can approximate the diameter with an additive factor of one in~$O(\gamma(n+m))$ time.
\end{observation}
\begin{proof}
	The algorithm is as follows: Run a breadth-first search from each vertex in the dominating set~$D$ and return the largest distance found.
	This can be done~$O(\gamma(n+m))$ time.
	Clearly, the value~$\ell$ returned by the algorithm is at most the diameter~$d$ of the input graph, that is, $\ell \le d$.
	It remains to show that~$d \le \ell + 1$.
	
	To this end, let~$u,v$ be the two furthest vertices, that is, ${\dis(u,v) = d}$.
	Observe that if either~$u$ or~$v$ is in the dominating set~$D$, then the algorithm returned~$\ell = d$.
	Thus, consider the case that neither~$u$ nor~$v$ are in~$D$.
	Since~$D$ is a dominating set, there is a vertex~$w \in D$ that is a neighbor of~$u$.
	Since~$w \in D$, the returned value is at least~$\ell \ge \dis(w,v)$.
	Hence, we have $d = \dis(u,v) \le \dis(w,v) + 1 \le \ell + 1$.
\end{proof}

Note that, although computing a minimum dominating set is NP-hard, a simple greedy algorithm computes a ($1 + \log n$)-approximation.
Thus, if the dominating set is not given, the worst-case running time of the above plus-one-approximation changes to~$O(\gamma(n+m)\log n)$, which is still far better than the lower bound for exactly computing the diameter.

\section{Deletion Distance to Special Graph Classes}
\label{sec:dist-from-class}

In this section, we investigate parameterizations that measure the distance to special graph classes.
The hope is that when \di{} can be solved efficiently in a special graph class~$\Pi$, then \di{} can be solved if the input graph is ``almost'' in~$\Pi$.
We study the following parameters in this order: \param{odd cycle transversal} (which is the same as \dt{bipartite graphs}), \dt{cographs}, and \param{feedback edge number}.
The first two parameters measure the vertex deletion distance to bipartite graphs and cographs, respectively. 
\param{Feedback edge number} measures the edge deletion distance to trees.
Note that the lower bound of \mbox{\citet{AWW16}} for the parameter \param{vertex cover number} (i.\,e.\ vertex deletion to edgeless graphs) already implies that there is no~$2^{o(k)}(n+m)^{2-\varepsilon}$-time algorithm for~$k$ being one of the first two parameters in our list unless the SETH breaks, since each of these parameters is smaller than the \param{vertex cover number} (see \cref{fig:param-hierarchy}).

\paragraph{Odd Cycle Transversal.}
We show that \di{} parameterized by \param{odd cycle transversal} and \param{girth} is~$4$-GP-hard.
Consequently, solving \di{} in $f(k) \cdot n^{2.3}$ for any computable function $f$ implies an $O(n^{2.3})$-time algorithm for \di{}---which would improve the currently best (unparameterized) algorithm.
The \param{girth} of a graph is the length of a shortest cycle in it.%
\begin{theorem}
	\label{thm:DtBisGPhard}
	\di{} is $4$-GP-hard with respect to the combined parameter \param{odd cycle transversal and girth}.
\end{theorem}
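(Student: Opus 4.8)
The plan is to give a reduction that runs in linear time, increases the instance size only linearly, and maps an instance $(G,t)$ of \di{} (in decision form: ``is $\dia(G)\ge t$?''; we may assume $|V(G)|\ge 2$) to an equivalent instance whose girth equals~$4$ and whose vertex deletion distance to bipartite graphs equals~$0$. Since both quantities are then at most~$4$, this establishes $4$-GP-hardness by checking the four conditions of \cref{def:k-GP-hard}. The graph $G'$ is built from $G$ in three steps. First, pad: attach to every vertex $v$ of $G$ a pendant path $v\,p_v^1\,p_v^2$ of length~$2$, obtaining $G_1$. A short case analysis over pairs of vertices gives $\dia(G_1)=\dia(G)+4$, shows that every diametral pair of $G_1$ consists of two of the new degree-$1$ vertices, and shows that every original vertex has eccentricity at most $\dia(G)+2=\dia(G_1)-2$ in $G_1$. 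The sole purpose of this step is to create enough structural slack for the next two.

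Second, make it bipartite: let $G_2$ be obtained from $G_1$ by subdividing every edge once. With the original vertices on one side and the subdivision vertices on the other, $G_2$ is bipartite, so its deletion distance to bipartite graphs is~$0$. The computation to carry out here is that subdividing every edge of an arbitrary graph $H$ once produces a graph of diameter at most $2\dia(H)+2$, a surplus over $2\dia(H)$ being possible only if some vertex sees both endpoints of an edge at distance $\dia(H)$, or two edges have all four of their pairwise cross-distances equal to $\dia(H)$. In $G_1$, every vertex at maximum distance has degree~$1$ and the neighbour of a degree-$1$ vertex is strictly closer to every other vertex, so neither situation occurs; hence $\dia(G_2)=2\dia(G_1)=2\dia(G)+8$.

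Third, lower the girth: pick an arbitrary original vertex $z\in V(G)\subseteq V(G_2)$, add three new vertices $a,b,c$ and the edges $za,ab,bc,cz$, and call the result $G'$. The attached $4$-cycle is bipartite and meets $G_2$ only in $z$, so $G'$ is still bipartite, and its girth is exactly~$4$ since subdivision stretched every former cycle to length at least~$6$. Because $a,b,c$ hang off the single vertex $z$, no shortest path between two vertices of $G_2$ passes through them, so distances inside $G_2$ are unchanged; and since $z$ is an original vertex, its eccentricity in $G_2$ is at most $2\dia(G)+4=\dia(G_2)-4$, whence $a,b,c$ have eccentricity at most $\dia(G_2)-2<\dia(G_2)$ in $G'$. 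Therefore $\dia(G')=\dia(G_2)=2\dia(G)+8$, so $\dia(G)\ge t$ if and only if $\dia(G')\ge 2t+8$, and outputting $(G',2t+8)$ with parameter value~$4$ satisfies (G1)--(G4).

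The crux is keeping the diameter under exact additive control through all three steps. Subdivision alone already sends $\dia(G)$ to some value in $\{2\dia(G),2\dia(G)+1,2\dia(G)+2\}$ whose exact size depends on fine structure of $G$, and attaching a cycle to an arbitrary vertex can inflate the diameter by the radius of the gadget; the pendant-path padding is precisely what both pins the subdivision down to the clean value $2\dia(G_1)$ and leaves enough eccentricity room at original vertices to host the $4$-cycle without changing the diameter. Lining up these constants is the only delicate point; the remaining verifications --- linearity of the construction, bipartiteness, girth, and the size bound --- are routine.
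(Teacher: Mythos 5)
Your proof is correct, but it takes a genuinely different route from the paper's. The paper uses a single one-shot construction: a bipartite double cover of $G$ (vertices $u_i,w_i$ with edges $\{u_i,w_j\},\{u_j,w_i\}$ for each edge of $G$) augmented by the perfect matching $\{u_i,w_i\}$. That graph is bipartite with girth $4$ (each original edge yields a $4$-cycle $u_iw_iu_jw_j$), and a parity analysis of distances shows $\dia(G')=\dia(G)+1$, which gives all four GP-hardness conditions with only a factor-two blowup. You instead compose three generic gadgets --- pendant-path padding, uniform edge subdivision as the bipartization step, and a dangling $4$-cycle to force girth $4$ --- and your accounting of how each step moves the diameter ($+4$, then $\times 2$ with the padding used to rule out the $+1$ and $+2$ surplus cases of subdivision, then $+0$ because the cycle sits at a vertex of small eccentricity) is accurate; I checked in particular that the eccentricity of an original vertex $z$ in $G_2$ is indeed at most $2\dia(G)+4$ even over subdivision vertices, so the $4$-cycle cannot raise the diameter. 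The trade-off: the paper's construction is more compact, keeps the diameter within $+1$ of the original, and is reused later in the paper in composition with the bisection-width reduction; your construction is more modular and transparent (each property is enforced by a separate standard gadget, and the only real work is pinning the subdivision diameter to exactly $2\dia(G_1)$), at the cost of a constant-factor distortion of the diameter and a slightly larger output graph --- none of which affects the GP-hardness conditions (G1)--(G4). Your framing via the decision threshold $t$ is also a legitimate (arguably cleaner) way to satisfy condition (G2) for a problem the paper states as a computation task.
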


\begin{proof}
	Let $G=(V,E)$ be an arbitrary undirected graph where $V = \{v_1,v_2,\ldots, \allowbreak v_n\}$.
	We construct a new graph~$G'=(V',E')$ as follows:
	$V' \defeq \{u_i, w_i \mid v_i \in V\}\text{, and}$
	$E' \defeq \{\{u_i, w_j\}, \{u_j,w_i\} \mid \{v_i, v_j\} \in E\} \cup \{\{u_i, w_i\}\mid v_i\in V\}.$

	An example of this construction can be seen in \cref{fig:bipartitegp}.
	\begin{figure}[t!]
		\centering
		\begin{tikzpicture}[]
			\begin{scope}[xshift = -1cm]
				\node[vertex,label=above:$v_1$] (v1) at (0,0) {};
				\node[vertex,label=above:$v_2$] (v2) at (2,0) {} edge (v1);
				\node[vertex,label=right:$v_3$] (v3) at (1,-1.5) {} edge(v1) edge (v2);
				\node[vertex,label=right:$v_4$] (v4) at (1,-3) {} edge (v3);
			\end{scope}
			
			\draw [->,decorate,decoration=snake] (1.85,-1.5) -- (4,-1.5);
			
			\node[vertex,label=above:$u_1$] (u1) at (5,-1) {};
			\node[vertex,label=above:$u_2$] (u2) at (8,0) {};
			\node[vertex,label=right:$u_3$] (u3) at (8,-2) {};
			\node[vertex,label=left:$u_4$] (u4) at (6,-3) {};
			\node[vertex,label=above:$w_1$] (w1) at (6,0) {} edge (u1) edge (u2) edge (u3);
			\node[vertex,label=above:$w_2$] (w2) at (9,-1) {} edge (u1) edge (u2)  edge (u3);
			\node[vertex,label=left:$w_3$] (w3) at (6,-2) {} edge (u1) edge (u2) edge (u3)  edge (u4);
			\node[vertex,label=right:$w_4$] (w4) at (8,-3) {} edge (u3) edge (u4);
			
			\tikzstyle{edge} = [color=black,opacity=.15,line cap=round, line join=round, line width=12pt]
			\begin{pgfonlayer}{background}
				\draw[edge] (v1.center) -- (v3.center) -- (v4.center);

				\draw[edge] (w1.center) -- (u3.center) -- (w4.center) -- (u4.center);
			\end{pgfonlayer}
		\end{tikzpicture}
		\caption{
				Example for the construction in the proof of \Cref{thm:DtBisGPhard}. 
				The input graph given on the left side has diameter two and the constructed graph on the right side has diameter three.
				In each graph one longest shortest path is highlighted.
			}
		\label{fig:bipartitegp}
	\end{figure}
	We will now prove that all properties of \cref{def:k-GP-hard} hold.
	It is easy to verify that the reduction can be implemented in linear time and therefore the resulting instance is of linear size as well.
	Observe that~$\{u_i \mid v_i \in V\}$ and~$\{w_i \mid v_i \in V\}$ are both independent sets and therefore~$G'$ is bipartite.
	Notice further that for any edge~$\{v_i,v_j\}\in E$ there is an induced cycle in~$G'$ containing the vertices~$\{u_i,w_i,u_j,w_j\}$.
	Since~$G'$ is bipartite there is no induced cycle of length three in~$G'$ and thus the girth of~$G'$ is four.

	Lastly, we show that~$\dia(G') = \dia(G)+1$ by proving that if~$\dis(v_i,v_j)$ is odd, then~$\dis(u_i,w_j) = \dis(v_i,v_j)$ and~$\dis(u_i,u_j) = \dis(v_i,v_j) + 1$, and if~$\dis(v_i,v_j)$ is even, then~$\dis(u_i,u_j) = \dis(v_i,v_j)$ and~$\dis(u_i,w_j) = \dis(v_i,v_j) + 1$.
	Since $\dis(u_i,w_i) = 1$ and~${\dis(u_i,w_j) = \dis(u_j,w_i)}$, this will conclude the proof.

	Let $P=v_{a_0}v_{a_1} \ldots v_{a_d}$ be a shortest path from~$v_i$ to~$v_j$ where~$v_{a_0}=v_i$ and~$v_{a_d} = v_j$.
	Let~$P'=u_{a_0}w_{a_1}u_{a_2} w_{a_3}\ldots$ be a path in~$G'$.
	Clearly, $P'$ is also a shortest path as there are no edges~$\{u_i, w_j\}\in E'$ where~$\{v_i,v_j\}\notin E$.

	If~$d$ is odd, then~$u_{a_0}w_{a_1}\ldots w_{a_d}$ is a path of length~$d$ from~$u_i$ to~$w_j$ and~$u_{a_0}w_{a_1}\allowbreak\ldots w_{a_d} u_{a_d}$ is a path of length~$d+1$ from~$u_i$ to~$u_j$.
	If~$d$ is even, then~$u_{a_0}w_{a_1}\ldots \allowbreak w_{a_{d-1}}u_{a_d}$ is a path of length~$d$ from~$u_i$ to~$u_j$ and~$u_{a_0} w_{a_1}\ldots w_{a_{d-1}}u_{a_d}w_{a_d}$ is a path of length~$d+1$ from~$u_i$ to~$w_j$. 
	Notice that~$G'$ is bipartite and thus~$\dis(u_i, u_j)$ must be even and~$\dis(u_i,w_j)$ must be odd.
\end{proof}

\paragraph{Distance to cographs.}
A graph is a cograph if and only if it does not contain a $P_4$ as an induced subgraph, where a~$P_4$ is a path on four vertices.
Providing an algorithm that matches the lower bound of \citet{AWW16}, we will show that \di{} parameterized by \param{distance~$k$ to cographs} can be solved in~$O(k \cdot (n + m) + 2^{O(k)})$ time.
To this end, we will use the following lemma covering the algorithm in a more general setting than we use.

\begin{lemma}
	\label{lem:dist-to-2-club}
	Let~$G = (V,E)$ be an edge-weighted graph and let~$K \subseteq V$ a vertex subset such that each connected component in~$G-K$ has diameter at most two.
	Then, the diameter of~$G$ can be computed in~$O(k \cdot (n \log n + m + 2^{4k}))$ time.
\end{lemma}

\begin{proof}
	We first compute all connected components and their diameter in~$G' := G-K$ in linear time and store for each vertex the information in which connected component it is.
	Note that we only need to check for each connected component~$C$, whether~$C$ induces a clique in~$G'$ and all edge-weights are one in~$C$; otherwise $C$'s diameter is by assumption two.
	In a second step, we perform in~$O(k\cdot (n\log n + m))$ time Dijkstra's algorithm in~$G$ from each vertex $v \in K$ and store the distance between~$v$ and every other vertex~$w$ in a table.

	Next we introduce some notation.
	The type of a vertex~$u\in V\setminus K$ is a vector of length~$d$ where the~$i$th entry describes the distance from~$u$ to~$x_i$ with the addition that any value above three is set to~$4$.
	We say a type is non-empty, if there is at least one vertex with this type.
	We compute for each vertex~$u\in V\setminus K$ its type.
	Additionally we store for each non-empty type the connected component its vertex is in or that there are at least two different connected components containing a vertex of that type.
	This takes~$O(n \cdot k)$ time and there are at most~$4^k$ many different types.

	Lastly, we iterate over all of the~$O(4^{2k})$ pairs of types (including the pairs where both types are the same) and compute the largest distance between vertices of these types.
	Let~$y,z$ be the vertices of the respective types with maximum pairwise distance.
	We will first discuss how to find~$y$ and~$z$ and then show how to correctly compute their distance in~$O(k)$ time.
	If both types only appear in the same connected component, then the distance between the two vertices of these types is at most two.
	Hence, we can discard this case (one can check in linear time whether the diameter of~$G$ is at least two).
	If two types appear in different connected components, then a longest shortest path between vertices of the respective type contain at least one vertex in~$K$.
	Observe that since each connected component has diameter at most two, each third vertex in any longest shortest path must be in~$K$.
	Thus a shortest~$y$-$z$--path contains at least one vertex~$x_i\in K$ with~$\dis(x_i,y) < 3$.
	By definition, each vertex with the same type as~$y$ has the same distance to~$x_i$ and therefore the same distance to~$z$ unless there is no shortest path from it to~$z$ that passes through~$x_i$, that is, it is in the same connected component as~$z$.
	Thus, we can choose two arbitrary vertices of the respective types in different connected components.
	Observe that when computing the types, one can also precompute the connected components the corresponding vertices are in.
	Thus, checking whether there are two vertices of the respective type in different connected components is just a table lookup.
	We can compute the distance between~$y$ and~$z$ in~$O(k)$ time by computing~$\min_{x\in K}\dis(y,x) + \dis(x,z)$.
	Observe that the shortest path from~$y$ to~$z$ contains~$x_i$ and therefore~${\dis(y,x_i) + \dis(x_i,z) = \dis(y,z)}$.
	We can compute the diameter of~$G$ this way in~${O(k\cdot (n \log n+m + 2^{4k}))}$ time. 
\end{proof}

Note that the algorithm described in the above proof does not verify if~$K$ is indeed a vertex set such that each connected component in~$G-K$ has diameter at most two.
Indeed, even in the unweighted case to distinguish diameter two and three in $O(n^{2-\varepsilon})$, $\varepsilon > 0$, time would refute the SETH~\cite{AWW16}.
Thus, the above algorithm cannot efficiently verify if the input meets the stated conditions.
Hence, when using \cref{lem:dist-to-2-club}, we need a way to ensure this condition.

Recall that a cograph does not contain a $P_4$ as an induced subgraph.
Thus, any unweighted cograph has diameter at most two (but not every diameter-two graph is a cograph, consider e.\,g.\ a cycle on five vertices).
Moreover, given a graph~$G$ one can determine in linear time whether~$G$ is a cograph and can return an induced~$P_4$ if this is not the case~\cite{BCHP08,CPS85}.
This implies that in~$O(k\cdot (n+m))$ time one can compute a set~$K\subseteq V$ with~$|K|\leq 4k$ such that~$G - K$ is a cograph: 
Iteratively add all four vertices of a returned~$P_4$ into the solution set and delete those vertices from~$G$ until it is~$P_4$-free.
Thus, we can compute a set~$K$ that satisfy the conditions of \cref{lem:dist-to-2-club} and the following theorem is immediate.

\begin{theorem}%
	\label{thm:cograph}
	\di{} can be solved in~$O(k \cdot (n + m + 2^{16k}))$ time when parameterized by \param{distance~$k$ to cographs}.
\end{theorem}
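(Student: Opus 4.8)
The plan is to push everything onto the size-$4k$ deletion set $K$ (which, together with the fact that $G-K$ is a cograph, can be computed in $O(k\,(n+m))$ time as noted above). First I would run a breadth-first search from every $b\in K$: this costs $O(k\,(n+m))$ time and produces $\dis_G(b,v)$ for all $b\in K$ and $v\in V$; in particular it gives $M_1:=\max_{b\in K,\,v\in V}\dis_G(b,v)$, the largest distance realised by a pair with an endpoint in $K$. Second, I would compute the connected components $C_1,\dots,C_t$ of $G-K$ in $O(n+m)$ time. Each $C_i$ is a connected cograph and hence has diameter at most two, so any two vertices lying in the same $C_i$ are at distance at most $2$ in $G$. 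After checking in $O(n+m)$ time whether $G$ is complete (returning $1$ if so), we may assume $\dia(G)\ge 2$; then
\[
  \dia(G)=\max\bigl(2,\ M_1,\ B\bigr),\qquad\text{where}\quad B:=\max_{i\ne j}\ \max_{u\in C_i,\ v\in C_j}\dis_G(u,v)
\]
(and $B:=-\infty$ if $t\le 1$), because the only pairs not accounted for by $M_1$ or $B$ have both endpoints in the same component of $G-K$ and are therefore at distance at most $2\le\dia(G)$. Everything thus reduces to computing $B$.

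For cross-component pairs, note that a shortest $u$–$v$ path with $u\in C_i$, $v\in C_j$, $i\ne j$, must meet $K$, so $\dis_G(u,v)=\min_{b\in K}\bigl(\dis_G(u,b)+\dis_G(b,v)\bigr)=\operatorname{MS}\bigl(\mathbf d(u),\mathbf d(v)\bigr)$, the entrywise min-sum of the distance vectors $\mathbf d(u):=(\dis_G(u,b))_{b\in K}$ and $\mathbf d(v)$, whose entries are already known from the BFS runs. The key structural point — and the reason the $n$-independent term appears — is that only $2^{O(k)}$ distinct vectors $\mathbf d(u)$ arise among $u\in V\setminus K$. Writing $N_i\subseteq K$ for the set of vertices of $K$ adjacent to $C_i$, one shows (i) $\dis_G(u,b')\le 3$ for every $u\in C_i$ and $b'\in N_i$ (go from $u$ to a neighbour of $b'$ inside $C_i$, using $\dia(C_i)\le 2$), and (ii) the identity $\dis_G(u,b)=\min_{b'\in N_i}\bigl(\dis_G(u,b')+\dis_G(b',b)\bigr)$ for every $b\in K$ (any shortest $u$–$b$ path leaves $C_i$ through some $b'\in N_i$). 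By (ii) the whole vector $\mathbf d(u)$ is determined by the ``type'' of $u$, the map $K\to\{0,1,2,3\}$ recording for each $b$ whether $b\notin N_i$ and otherwise the value $\dis_G(u,b)\in\{1,2,3\}$ (valid by (i)); there are at most $4^{|K|}\le 4^{4k}=2^{8k}$ types, and equal type forces equal $\mathbf d(u)$, so at most $2^{8k}$ distinct distance vectors occur.

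It remains to assemble $B$. In one pass over $V\setminus K$ — radix-sorting the $O(k)$-entry vectors $\mathbf d(u)$ in $O(k\,n)$ time — I would bucket the vertices by $\mathbf d(u)$, obtaining the set $\mathcal T$ of occurring vectors with $|\mathcal T|\le 2^{8k}$, and for each $p\in\mathcal T$ record whether $p$ occurs in zero components of $G-K$, in exactly one (and which), or in at least two. Then $B=\max\operatorname{MS}(p,q)$ over all ordered pairs $p,q\in\mathcal T$ that are \emph{realisable across components}, i.e.\ for which some vertex with distance vector $p$ and some vertex with distance vector $q$ lie in different components; with the recorded per-vector data this condition is $O(1)$-checkable (it fails exactly when $p$ and $q$ both occur in only a single component and it is the same one). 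Iterating over all $|\mathcal T|^2\le 2^{16k}$ pairs and spending $O(|K|)=O(k)$ per pair to evaluate $\operatorname{MS}$ costs $O(2^{16k}k)$, and outputting $\max(2,M_1,B)$ finishes within the claimed $O(k\,(n+m)+2^{16k}k)$ bound.

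The step I expect to be the real obstacle is the $2^{O(k)}$ bound on the number of distance vectors in the second paragraph: the trivial bound is useless since distances can be $\Theta(n)$, and the argument genuinely needs both $\dia(C_i)\le 2$ and the fact that a shortest path out of $C_i$ ``commits'' to the bounded set $N_i$ already at distance at most $3$. A minor but real subtlety is getting the reduction to pairs of vectors exactly right — in particular ensuring that vectors occurring in a single component do not cause an intra-component pair (which is at distance $\le 2$ anyway, hence harmless) to be counted towards $B$, while no genuine cross-component pair is missed.
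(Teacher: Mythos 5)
Your proposal is correct and follows essentially the same route as the paper's proof: compute the $4k$-size modulator $K$, BFS from each vertex of $K$, exploit that each cograph component has diameter at most two to bound the number of distance "types" by $4^{|K|}\le 2^{8k}$, and then iterate over all $2^{16k}$ pairs of types at cost $O(k)$ each, handling intra-component pairs separately. Your lemma that the type determines the entire distance vector (via the sets $N_i$ of $K$-vertices adjacent to a component) is a slightly more explicit packaging of the paper's observation that every third vertex of a long shortest path lies in $K$, but the underlying argument and the resulting running time are the same.
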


\begin{proof}
	Let~$G=(V,E)$ be the input graph with \param{distance~$k$ to cograph}.
	Let~$K$ be a set of vertices such that~$G' = G - K$ is a cograph with~$|K| \leq 4k$.
	Recall that~$K$ can be computed in~$O(k \cdot (n+m))$ time.
	
	Thus, applying \cref{lem:dist-to-2-club} yields a running time of~$O(k \cdot (n + m + 2^{16k}))$.
	Note that since we are in the unweighted setting, we can replace Dijkstra's algorithm in the proof of \cref{lem:dist-to-2-club} by a simple breadth-first search and thus get rid of the log-factor in the running time.
\end{proof}

Note that a clique is also a cograph. 
Thus, following the same argumentation given after \cref{obs:dist-to-clique}, it follows that a generalization of \cref{thm:cograph} to the weighted case would significantly improve the state-of-the-art algorithm for \di{}.

\paragraph{Feedback edge number.}
We will prove that \wdi{} parameterized by \param{feedback edge number}~$k$ can be solved in~$O(k\cdot n \log n)$ time.
One can compute a minimum feedback edge set~$K$ (with $|K| = k$) in linear time by taking all edges not in a spanning tree.
Recently, this parameter was used to speed up algorithms computing maximum matchings~\cite{KNNZ18}.
Note that~$k \le m$, thus the subsequently provided $O(k\cdot n \log n)$-time algorithm is adaptive, that is, it is not slower than the standard~$O(n(n \log n + m))$-time algorithm but can be much faster in case~$k = o(m)$.
In the remainder of this section we will prove the following.

\begin{theorem}%
	\label{thm:fes}
	\wdi{} parameterized by \param{feedback edge number}~$k$ can be solved in~$O(k\cdot n \log n)$ time.
\end{theorem}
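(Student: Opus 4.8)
The plan is to exploit the structure of a graph with small feedback edge number: after removing the $k$ edges of a minimum feedback edge set $K$, what remains is a forest (a tree, since $G$ is connected, we may assume it stays connected or handle components separately), and a graph that is ``almost a tree'' has only a bounded-size ``core'' of high-degree or branching vertices, with the rest consisting of paths hanging off this core. Concretely, I would first compute $K$ (and thus a spanning tree $T$) in linear time. Let $W$ be the set of endpoints of edges in $K$ together with all vertices of degree at least $3$ in $G$; a standard counting argument shows $|W| = O(k)$, and moreover $G$ can be decomposed into $W$ plus a collection of $O(k)$ internally-vertex-disjoint induced paths ($\mathbb{}$``pendant paths'' attached to $W$ at one or two endpoints), again of total length $O(n)$ but with only $O(k)$ many of them. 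Pendant trees (subtrees attached to $W$ at a single vertex) also contribute, but the longest shortest path to a vertex inside such a pendant tree is easy to track.

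The core idea is then: a longest shortest path in $G$ is determined by knowing, for the $O(k)$ ``branch vertices'' in $W$, the pairwise distances between them (an APSP on an $O(k)$-vertex auxiliary weighted graph), plus, for each vertex $v$, the distance from $v$ to each of the (at most two) nearest branch vertices bounding the path/subtree that contains $v$. First I would contract each pendant path into a single weighted edge between its two attachment points in $W$ (and handle the path's internal diameter and eccentricities separately), producing a weighted multigraph $H$ on $O(k)$ vertices and $O(k)$ edges. Running any APSP algorithm on $H$ takes $O(k^2 \cdot \mathrm{poly}(k)) = O(\mathrm{poly}(k))$ time, which is dominated by $O(kn)$ for the parameter range of interest; alternatively run BFS-like Dijkstra from each of the $O(k)$ vertices. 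Then, walking over the $O(k)$ pendant paths and pendant trees, for each internal vertex $v$ one computes $\dis_G(v, w)$ for $w \in W$ in $O(k)$ amortized time per vertex by combining its offset along its path with the precomputed $H$-distances, giving $O(kn)$ total; the eccentricity of $v$ is then the maximum over $w \in W$ of $\dis_G(v,w)$ plus the precomputed eccentricity of $w$ restricted to each path/subtree, corrected for the case where the farthest vertex lies on the same pendant structure as $v$.

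The main obstacle I anticipate is the bookkeeping for pairs of vertices lying on the \emph{same} pendant path (or the same pendant tree), where the shortest path between them need not leave that structure, so the ``route through $W$'' formula overestimates; this requires separately computing, for each pendant path $P$ with attachment points $a,b$, the function $v \mapsto \max_{u \in P} \dis_G(u,v)$, which is a one-dimensional problem (each such distance is $\min$ of going left to $a$ then around, going right to $b$ then around, or staying inside $P$) solvable in $O(|P|)$ time by a scan, or $O(|P| \cdot k)$ if we fold in the comparison against the rest of the graph. Getting all these cases to fit into the $O(kn)$ budget — in particular ensuring the per-vertex work is $O(k)$ and not $O(k^2)$ — is where care is needed, but since each vertex participates in exactly one pendant structure and the ``core'' has size $O(k)$, the amortized accounting should go through. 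Finally, the diameter of $G$ is the maximum of: the diameter of $H$ (accounting for edge-internal distances), the largest eccentricity of a branch vertex, and the largest eccentricity of an internal vertex, each of which we have computed.
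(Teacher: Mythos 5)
Your overall architecture---prune pendant structures, reduce to $O(k)$ branch vertices joined by $O(k)$ internally disjoint paths, handle same-path and cross-path pairs separately---is the same as the paper's, but two concrete steps fail as written. First, the claim $|W|=O(k)$ is false for your definition of $W$: a tree has $k=0$ yet can contain $\Theta(n)$ vertices of degree at least~$3$. The $O(k)$ bound on branch vertices holds only \emph{after} iteratively deleting degree-one vertices, and once you do that, every surviving vertex~$v$ must carry a weight recording the depth of the deepest pruned vertex hanging below it (the paper's $\pen(v)$), together with a separate accumulator for shortest paths whose endpoints both lie in pendant trees attached to the same vertex. These weights then sit on the \emph{interior} vertices of your contracted paths and must be threaded through every subsequent distance computation; your decomposition, which attaches pendant trees only to vertices of $W$, cannot accommodate this without breaking the $|W|=O(k)$ count.

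Second, and more seriously, your cross-path eccentricity formula is incorrect. For $v$ outside a pendant path $P'$ with attachments $a',b'$ and interior $x_1,\dots,x_{L-1}$, one has $\dis(v,x_j)=\min\{\dis(v,a')+j,\ \dis(v,b')+L-j\}$, so $\max_{u\in P'}\dis(v,u)$ is attained at a crossover index $j\approx(L+\dis(v,b')-\dis(v,a'))/2$ that \emph{depends on $v$}; it is not $\max_{w\in\{a',b'\}}\bigl[\dis(v,w)+\max_{u\in P'}\dis(w,u)\bigr]$, which only upper-bounds it. (Take $L=10$ with $v$ adjacent to both $a'$ and $b'$: your formula yields $1+6=7$ while the true maximum is $6$.) Since the diameter is a maximum over all such quantities, these overestimates can inflate the output, so this is a correctness failure, not just slack. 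You flag only the same-structure case as needing correction, but the different-path case is exactly where the paper invests most of its effort (its Case~3): a monotone two-pointer sweep over each pair of maximal paths computing these maxima in $O(|P_1|+|P_2|)$ time per pair, which is needed because once the $\pen$-weights from pruned pendant trees are folded in, no $O(1)$ closed-form crossover expression exists. Without that ingredient your argument establishes neither correctness nor the $O(k\cdot n)$ bound.
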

{
The algorithm behind the above theorem works roughly in two steps: 
In a first step, we apply data reduction rules.
On the one hand, these rules can shrink the graph considerably. 
On the other hand, these rules also create a special structure: 
After these rules are exhaustively applied, there are ``few'' vertices of degree at least three; moreover, these high-degree vertices are connected via ``few'' paths.
In the second step, the algorithm uses this structure in a case distinction to compute the diameter in~$O(k\cdot n \log n)$ time.

Note that the data reduction rules delete vertices from the graph.
However, since at the time of deletion, we do not know whether these vertices are contained in a shortest path defining the diameter, we need to keep additional information.
In particular, we introduce a second weight function~$\pen$ (for pending) and an integer~$s$.
Intuitively, $\pen(v)$ stores the length of a longest shortest path~$P$ with one endpoint being~$v$ and the other endpoint in~$P$ being already deleted by the data reduction rules.
The role of~$s$ is to store the length of a longest shortest path where \emph{both} endpoints are already deleted.
This leads to the following formal problem definition:

\problemdef{\dwdi}
{An undirected, connected graph~$G=(V,E)$, weight functions~${\w \colon E \rightarrow \N^+}$ and~$\pen \colon V \rightarrow \N$, and~$s \in \N$.}
{Compute $\max\{\dia^{\pen}(G), s\}$, where $$\dia^{\pen}(G) := \max_{v,w\in V} \{\dis_G^{\pen}(v,w)\} := \max_{v,w\in V} \{\pen(v) + \dis_G(v,w) + \pen(w)\}.$$}

Notice that if all $\pen$-weights and~$s$ are set to~$0$, then the problem is the same as \wdi.
We therefore start with initializing all $\pen$-weights and~$s$ to~$0$ and applying our reduction rule that removes degree-one vertices from the graph.
The main idea of the reduction rule is simple: 
If a degree-one vertex~$u$ is removed, then the value~$\pen(v)$ ($v$ is the unique neighbor of~$u$) is adjusted and we store in an additional variable~$s$ the length of a longest shortest path that cannot be recovered from the reduced graph.
This addresses the case that a longest shortest path has both its endpoints in pending trees (trees removed by our reduction rule) that are connected to the same vertex.
Initially, $s$ is set to zero.
The first reduction rule is defined as follows (see \cref{fig:fes-rr-example} for an example illustrating the subsequent two reduction rules).
\begin{figure}[t!]
	\centering
	\begin{tikzpicture}[]
		\begin{scope}[xshift = -6cm]
			\node[vertex,label=above:$v_1$] (v1) at (0,0) {};
			\node[vertex,label=above:$v_2$] (v2) at (2,0) {} edge node[label=1] {} (v1);
			\node[vertex,label=right:$v_3$] (v3) at (1,-1.5) {} edge node[midway,label=left:1] {} (v1)  edge node[midway,label=right:1] {} (v2);
			\node[vertex,label=right:$v_4$] (v4) at (0,-3) {} edge node[midway,label=left:3] {} (v3);
			\node[vertex,label=right:$v_5$] (v5) at (2,-3) {} edge node[midway,label=right:5] {} (v3);
			\node[] (s1) at (1,-2.7) {$s\ {=\ } 0$};
		\end{scope}
		
		\draw (-4.5,0.6) edge[bend left,->] node[midway,label=above:{\cref{rr:deg-one}}] {} (-2.5,0.6);
		\draw (-0.75,0.6) edge[bend left,->] node[midway,label=above:{\cref{rr:deg-one}}] {} (1.25,0.6);
		\draw (2.5,0.6) edge[bend left,->] node[midway,label=above:{\cref{rr:pc}}] {} (4.5,0.6);

		\begin{scope}[xshift = -2.5cm]
			\node[vertex,label=above:$v_1$] (w1) at (0,0) {};
			\node[vertex,label=above:$v_2$] (w2) at (2,0) {} edge node[label=1] {} (w1);
			\node[vertex,label=left:$\pen {=\ } 3$,label=right:$v_3$] (w3) at (1,-1.5) {} edge node[midway,label=left:1] {} (w1)  edge node[midway,label=right:1] {} (w2);
			\node[vertex,label=right:$v_5$] (w5) at (2,-3) {} edge node[midway,label=right:5] {} (w3);
			\node[] (s1) at (1,-2.7) {$s\ {=\ } 3$};
		\end{scope}

		\begin{scope}[xshift = 1cm]
			\node[vertex,label=above:$v_1$] (u1) at (0,0) {};
			\node[vertex,label=above:$v_2$] (u2) at (2,0) {} edge node[label=1] {} (u1);
			\node[vertex,label=left:$\pen {=\ } 5$,label=right:$v_3$] (u3) at (1,-1.5) {} edge node[midway,label=left:1] {} (u1)  edge node[midway,label=right:1] {} (u2);
			\node[] (s1) at (1,-2.7) {$s\ {=\ } 8$};
		\end{scope}

		\begin{scope}[xshift = 4cm]
			\node[vertex,label=left:$\pen {=\ } 5$,label=right:$v_3$] (t3) at (1,-1.5) {};
			\node[] (s1) at (1,-2.7) {$s\ {=\ } 8$};
		\end{scope}
					
		\tikzstyle{edge} = [color=black,opacity=.15,line cap=round, line join=round, line width=12pt]
		\begin{pgfonlayer}{background}
			\draw[edge] (v4.center) -- (v3.center) -- (v5.center);
			\draw[edge] (w3.center) -- (w5.center);
		\end{pgfonlayer}
	\end{tikzpicture}
	\caption{
		Example for the application of \cref{rr:deg-one,rr:pc}. 
		On the left is the input graph, middle left and middle right are the results of applying \cref{rr:deg-one}.
		On the right is the result of applying \cref{rr:pc} to the middle right graph.
		If no pen-value is displayed for a vertex~$v$, then~$\pen(v)=0$.
		The diameter-defining path is highlighted in the two left graphs and stored in~$s$ in the two right graphs (when the diameter-defining path is no longer contained in the remaining graph).
	}
	\label{fig:fes-rr-example}
\end{figure}

\begin{rrule}
	\label{rr:deg-one}
	Let~$u$ be a vertex of degree one and let~$v$ be its neighbor.
	Delete~$u$ and the incident edge from~$G$, set~$s = \max\{s, \pen(u) + \pen(v) + \w(\{u,v\})\}$ and~$\pen(v) = \max\{\pen(v), \pen(u) + \w(\{u,v\})\}$.
\end{rrule}

Before we analyze the running time and correctness, we first present a second reduction rule that we apply after \Cref{rr:deg-one} is not applicable anymore.
Since the resulting graph has no degree-one vertices we can partition the vertex set of the remaining graph into vertices~$V^{=2}$ of degree exactly two and vertices~$V^{\geq 3}$ of degree at least three.
Using standard argumentation we can show that~$|V^{\geq 3}| \in O(\min\{k,n\})$ and all vertices in~$V^{=2}$ are either in \emph{pending cycles} or in \emph{maximal paths}~\cite[Lemma 5]{BMKNN18}.
A maximal path is an induced subgraph~$P = x_0x_1\ldots x_a$ where~$\{x_i,x_{i+1}\} \in E$ for all~$0\leq i <a$, $x_0, x_a \in V^{\geq 3}$, $x_i\in V^{=2}$ for all~$0<i<a$, and~$x_0 \neq x_a$.  
A pending cycle is basically the same except~$x_0 = x_a$ and~$\deg(x_0)$ may possibly be two.
The set~$\mathcal{C}$ of all pending cycles and~$\mathcal{P}$ of maximal paths can be computed in~$O(n+m)$ time~\cite[Lemma 6]{BMKNN18}.
The second reduction rule works similar to \Cref{rr:deg-one}, but instead of deleting degree-one vertices, it removes pending cycles.

\begin{rrule}
	\label{rr:pc}
	Let~$C = x_0x_1\ldots x_{a}$ be a pending cycle.
	Let~$x_k$ be the vertex that maximizes~$\pen(x_k) + \dis(x_0,x_k)$ in~$C$.
	Delete all vertices in~$C$ except for~$x_0$ (and all incident edges) from~$G$, set~$s = \max\{s, \dia^{\pen}(C)\}$ and~$\pen(x_0) = \max\{\pen(x_0), \pen(x_k) + \dis(x_0,x_k)\}$.
\end{rrule}

We now prove the correctness of these two data reduction rules.
That is, given an instance~$(G,\w,\pen,s)$ of \dwdi{} let~$(G',\w',\pen',s')$ be the instance created by applying a data reduction rule~$R$ once.
Then, $R$ is \emph{correct} if~$\max\{s,\dia^{\pen}(G)\} = \max\{s',\dia^{\pen}(G')\}$.

\begin{lemma}
	\label{lem:rr1-correctness}
	\Cref{rr:deg-one,rr:pc} are correct.
\end{lemma}

\begin{proof}
	Let~$(G = (V,E),\w,\pen,s)$ be the input instance of \dwdi{} and~$(G' = (V',E'),\w',\pen',s')$ the instance resulting of an application of \cref{rr:deg-one} to the degree-one vertex~$u$ with neighbor~$v$ or \cref{rr:pc} to a pending cycle~$C = x_0,x_1,\ldots,x_a$.
	We start with making some statements that are true for both reduction rules.

	We first show that~$\dia^{\pen}(G) \ge \dia^{\pen}(G')$, that is, the ($\pen$-adjusted) diameter in~$G$ is at least as large as in~$G'$.
	To this end, let~$w,w' \in V'$ such that~$\dia^{\pen}(G') = \pen'(w) + \dis_{G'}(w,w') + \pen'(w')$.
	Observe that if~$w \neq v$ and~$w' \neq v$ (for \cref{rr:deg-one}) or~$w \neq x_0$ and~$,w' \neq x_0$ (for \cref{rr:pc}), then
	$$\pen'(w) + \dis_{G'}(w,w') + \pen'(w') \leq \pen(w) + \dis_{G}(w,w') + \pen(w') \le \dia^{\pen}(G).$$
	Thus, it remains to consider the case that~$w' = v$ for \cref{rr:deg-one} and~$w' = x_0$ for \cref{rr:pc} (the cases~$w=v$ respectively~$w = x_0$ are completely analogous). 
	In the case of \cref{rr:deg-one} we have
	\begin{align*}
		& \pen'(w) + \dis_{G'}(w,w') + \pen'(w') \\
		={}&{}\pen(w) + \dis_{G}(w,v) + \max\{\pen(v), \w(\{u,v\}) + \pen(u)\} \le \dia^{\pen}(G).
	\end{align*}
	In the case of \cref{rr:pc} we have for the ``furthest'' vertex~$x_k$ from~$x_0$ in~$C$ that
		\begin{align*}
		& \pen'(w) + \dis_{G'}(w,w') + \pen'(w') \\
		={}&{}\pen(w) + \dis_{G}(w,x_0) + \max\{\pen(x_0), \dis(\{x_0,x_k\}) + \pen(x_k)\} \le \dia^{\pen}(G).
	\end{align*}	
	Thus, $\dia^{\pen}(G) \ge \dia^{\pen}(G')$.

	Next, observe that~$s \le s'$.
	Moreover, observe that if~$s \ge \dia^{\pen}(G)$, then we have~$\max\{s,\dia^{\pen}(G)\} = s = s' = \max\{s',\dia^{\pen}(G')\}$ since~$s' \ge s \ge \dia^{\pen}(G) \ge \dia^{\pen}(G')$.
	Thus, it remains to consider the case~$s < \dia^{\pen}(G)$ and, hence, to show that~$\dia^{\pen}(G) = \max\{s',\dia^{\pen}(G')\}$.
	
	We split this last part of the proof into two parts, where we first consider~\cref{rr:deg-one} and then consider \cref{rr:pc} in the second part.
	For the first part, let~$w,w' \in V$ such that~$\dia^{\pen}(G) = \pen(w) + \dis_{G}(w,w') + \pen(w')$.
	We make a case distinction on the size of~$\{w,w'\} \cap \{u,v\}$ (that is, whether~$w$ or~$w'$ are equal to~$v$ or~$u$).

	Case 1: $|\{w,w'\} \cap \{u,v\}| = 2$.
	Since~$s < \dia^{\pen}(G)$, we have by definition of~$s'$ that
	\begin{align*}
		\dia^{\pen}(G) = \pen(u) + \dis_{G}(u,v) + \pen(v) = \pen(u) + \w(\{u,v\}) + \pen(v) = s'.
	\end{align*}
	Since~$\dia^{\pen}(G') \le \dia^{\pen}(G)$, it follows that~$\max\{s',\dia^{\pen}(G')\} = s' = \dia^{\pen}(G)$.

	In the following two cases we assume that~$\dia^{\pen}(G) > \pen(u) + \dis_{G}(u,v) + \pen(v)$; otherwise we are in Case 1.
	Hence, it follows that also~$s' < \dia^{\pen}(G)$ since~$s < \dia^{\pen}(G)$.

	Case 2:~$|\{w,w'\} \cap \{u,v\}| = 1$.
	Thus, we need to show~$\dia^{\pen}(G') \ge \dia^{\pen}(G)$ (as we already proved~$\dia^{\pen}(G') \le \dia^{\pen}(G)$ and assume~$s' < \dia^{\pen}(G)$).
	To this end, let~$w' \in \{u,v\}$ and~$w \notin \{u,v\}$.
	Hence, we have
	\begin{align*}
		\dia^{\pen}(G) 	& = \pen(w) + \dis_{G}(w,w') + \pen(w') \\
						& = \pen(w) + \dis_{G}(w,v) + \max\{\pen(v), \pen(u) + \w(\{u,v\})\} \\
						& = \pen'(w) + \dis_{G'}(w,v) + \pen'(v) \le \dia^{\pen}(G').
	\end{align*}
	Thus, $\dia^{\pen}(G') = \dia^{\pen}(G)$.

	Case 3:~$|\{w,w'\} \cap \{u,v\}| = 0$.
	Again, we need to show~$\dia^{\pen}(G') \ge \dia^{\pen}(G)$.
	To this end, neither~$w$ nor~$w'$ are changed by \cref{rr:deg-one}.
	Thus,
	$$\pen(w) + \dis_{G}(w,w') + \pen(w') = \pen'(w) + \dis_{G'}(w,w') + \pen'(w') \le \dia^{\pen}(G').$$
	This finishes the last case and concludes the proof for \cref{rr:deg-one}.
	
	We continue with the proof for \cref{rr:pc}.
	To this end we consider two cases: Either~$s' > \dia^{\pen}(G')$ or~$s' \leq \dia^{\pen}(G')$.
	
	Case 1:~$s' \geq \dia^{\pen}(G')$.
	We show that~$s' = \dia^{\pen}(G)$.
	Since~$s' \geq \dia^{\pen}(G')$, there is no shortest path of length~$s'+1$ in~$G'$.
	Since~$G$ and~$G'$ only differ in~$C$, it suffices to show that there is a shortest path of length~$s'$ in~$G$ and that there is no longer path that starts in~$C$.
	By construction, there is a pair of vertices~$x_i, x_j$ in~$C$ such that~$\dis^{\pen}_{G}(x_i,x_j) = s'$.
	Now assume that there is a shortest path of length at least~$s'+1$ in~$G$ that starts in~$C$.
	By construction the path has to end outside of~$C$ as otherwise~$s'$ would be larger.
	Let~$v$ be the other endpoint of the path.
	Then,~$\dia^{\pen}(G') \geq \dis_{G'}^{\pen}(x_0,v) > s'$---a contradiction.
	
	Case 2:~$s' < \dia^{\pen}(G')$.
	We will show that~$\dia^{\pen}(G) \leq \dia^{\pen}(G')$.
	We first define~$V_C = \{x_0,x_1,\ldots,x_{a-1}\}$ to be the set of vertices in~$C$.
	Again, let~$w,w' \in V$ such that~$\dia^{\pen}(G) = \pen(w) + \dis_{G}(w,w') + \pen(w')$ and we make a case distinction on the size of~$\{w,w'\} \cap V_C$.
	
	Subcase 1:~$|\{w,w'\} \cap V_C| = 0$.
	Since~$G$ and~$G'$ only differ in~$C$, we have
	\begin{align*}
		\dia^{\pen}(G) 	& = \pen(w) + \dis_{G}(w,w') + \pen(w') \\
						& = \pen'(w) + \dis_{G'}(w,w') + \pen'(w') \le \dia^{\pen}(G').
	\end{align*}

	Subcase 2:~$|\{w,w'\} \cap V_C| = 2$.
	In this case by definition of~$s'$, we have that~$s' = \dia_G^{\pen} \geq \dia_{G'}^{\pen}$---a contradiction.

	Subcase 3:~$|\{w,w'\} \cap V_C| = 1$.
	We assume without loss of generality that~$w \notin V_C$ and~$w' \in V_C$.
	Then we have
	\begin{align*}
		\dia^{\pen}(G) 	& = \pen(w) + \dis_{G}(w,w') + \pen(w') \\
						& \leq \pen(w) + \dis_{G}(w,x_0) + \max\{\pen(x_0), \pen(w') + \dis_{G}(\{x_0,w'\})\} \\
						& = \pen'(w) + \dis_{G'}(w,x_0) + \pen'(x_0) \le \dia^{\pen}(G').
	\end{align*}
	This finishes the last case and concludes the proof.
\end{proof}

We now analyze the running time of~\Cref{rr:deg-one,rr:pc}.

\begin{lemma}
	\label{lem:rr2-correctness}
	Given a pending cycle~$C = x_0x_1\ldots x_{a}$, \Cref{rr:pc} can be applied in~$O(a)$ time.
\end{lemma}

\begin{proof}
First, in~$O(a)$ time we compute~$k$ such that~$\dis(x_k,x_0) + \pen(x_k)$ is maximized and if~$k\neq 0$, then we set~$s = \max\{s,\pen(x_0) + \dis(x_k,x_0) + \pen(x_k)\}$.
(For~$k=0$ we do not update~$s$.)
It remains to show how to compute~$\dia^{\pen}(C)$, the longest shortest path that starts and ends in~$C$.
To this end, we first compute the sum~$W$ of all edge-weights in~$C$, that is,~$W = \sum_{i=0}^{a-1} \w(\{x_i,x_{i+1}\})$.

Next we define two distance measures~$d_{\cl},d_{\cc}$ (for clockwise and counter-clockwise) such that 
\begin{align*}
		 d_{\cl}(x_i,x_j) ={}&{}\w(\{x_i,x_{i+1 \bmod a}\}) \\ 
						& + \w(\{x_{i+1 \bmod a},x_{i+2\bmod a}\}) + \ldots +\w(\{x_{j-1 \bmod a},x_j\}) && \text{ and}\\
		d_{\cc}(x_i,x_j) ={}&{}\w(\{x_i,x_{i-1 \bmod a}\}) \\
						& + \w(\{x_{i-1 \bmod a},x_{i-2\bmod a}\}) + \ldots +\w(\{x_{j+1 \bmod a},x_j\}).
\end{align*}
Note that~$d_{\cl}(x_i,x_j) + d_{\cc}(x_i,x_j) = W$ and~$d_{cl}(x_i,x_j) = d_{\cc} (x_j,x_i)$.

We provide a dynamic program that only considers ``clockwise'' shortest paths between~$x_{\ell}$ and~$x_{j}$, that is, paths of length~$\pen(x_{\ell}) + d_{\cl}(x_{\ell},x_{j}) + \pen(x_{j})$ that satisfy~$d_{\cl}(x_{\ell},x_{j}) \le d_{\cc}(x_{\ell},x_{j})$ (otherwise it is not a shortest path).
Observe that all ``counter-clockwise'' paths will be considered in the iteration where the role of~$x_{j}$ and~$x_{\ell}$ is swapped as~$d_{cc}(x_{\ell},x_{j}) = d_{\cl} (x_{\ell},x_{j})$.

The dynamic program uses a table~$T$ with~$a$ entries, where the~$\ell^\text{th}$ entry corresponds to~$x_\ell$ and the value stored in the entry is the vertex~$x_j$ furthest from~$x_\ell$, formally, 
$$x_j := \argmax_{x \in \{x_i \, \mid \, d_{\cl}(x_{\ell},x_{i}) \, \le \, d_{\cc}(x_{\ell},x_{i})\}} \{ \dis(x,x_\ell) \}.$$
For initialization, we start with computing~$T[x_0]$ by checking in~$O(a)$ time all vertices in~$C$.
Besides the table~$T$, the dynamic program has one more variable~$r$ storing the length of a longest shortest path found so far.
Initially, $r = \pen(x_0) + \dis(x_k,x_0) + \pen(x_k)$.

Given~$x_j = T[x_\ell]$ for some vertex~$x_\ell$ the dynamic program computes the furthest vertex~$x_{j'}$ from~$x_{\ell+1}$ and updates~$r$ if any longest shortest path from~$x_{\ell+1}$ is longer than~$r$.
Note that the furthest vertex~$x_{j'}$ from~$x_{\ell+1}$ is either the furthest vertex~$T[x_\ell] = x_j$ from~$x_\ell$ or some vertex~$x$ that is ignored by~$x_\ell$. 
The only possible vertices that are ignored by~$x_\ell$ but not by~$x_{\ell+1}$ are the vertices~$x$ with~$d_{\cl}(x_{\ell},x) > d_{\cc}(x_{\ell},x)$ and~$d_{\cl}(x_{\ell+1},x) \leq d_{\cc}(x_{\ell+1},x)$.
Thus, we can compute the furthest vertex from~$x_{\ell+1}$ in constant amortized time as follows:
We can compute the furthest vertex~$x_{j'}$ from~$x_{\ell+1}$ by iterating over the vertices~$x_{j+1 \bmod a}, x_{j+2 \bmod a}, \ldots$ and check whether
\begin{align*}
	& d_{\cc}(x_{\ell+1},x_{j+1\bmod a}) \\
	={}& d_{\cc}(x_{\ell},x_j) - \dis(x_{\ell},x_{\ell+1}) + \dis(x_{j \bmod a},x_{j+1 \bmod a}) \leq W/2.
\end{align*}
If this first check is met, then we compute the ``pen''-distance~$d_{\cc}(x_{\ell+1},x_{k+1 \bmod a}) + \pen(x_{\ell+1}) + \pen(x_{k+1 \bmod a})$.
If this is larger than~$r$, then we update~$r$ with this value (a longer shortest path was found).
We then continue with~$x_{k+2 \bmod a}$ and so on until the first check is not met anymore.

The whole pending cycle can be checked in~$O(a)$ time in this way and we can set~$s = \max\{s,r\}$.
\end{proof}

We now analyze the running time of both reduction rules.

\begin{lemma}
	\label{lem:rr-runtime}
	\Cref{rr:deg-one,rr:pc} can be exhaustively applied in~$O(n+m)$ time.
\end{lemma}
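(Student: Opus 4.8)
The plan is to show that each reduction rule, applied to a single eligible structure, runs in time linear in the size of that structure, and that the total size of all structures processed is $O(n+m)$ on the original graph. First I would handle \Cref{rr:deg-one}: a single application touches only the removed degree-one vertex $u$ and its neighbour $v$, so it runs in $O(1)$ time provided we can locate a degree-one vertex quickly. To do this efficiently across the whole exhaustive application, I would maintain a queue (or stack) of current degree-one vertices together with an array of current degrees; when $u$ is deleted, the degree of $v$ is decremented, and if it drops to one, $v$ is pushed onto the queue. Since each vertex is deleted at most once and each edge is examined a constant number of times (once when initializing degrees and once when its endpoint is removed), the whole exhaustive application of \Cref{rr:deg-one} costs $O(n+m)$ time. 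The graph $G_1$ that results has minimum degree two.

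Next I would invoke the known structural decomposition: after \Cref{rr:deg-one} is exhausted, $G_1$ decomposes (in $O(n+m)$ time, by \cite[Lemma 2]{BMKNN18}) into the set $V^{\ge 3}$ with $|V^{\ge 3}| \in O(\min\{k,n\})$, the set $\mathcal{P}$ of maximal paths, and the set $\mathcal{C}$ of pending cycles, where the interiors of the maximal paths and the non-anchor vertices of the pending cycles consist of degree-two vertices, and these sets are vertex-disjoint except at the endpoints in $V^{\ge 3}$. By \Cref{lem:rr2-correctness}, each pending cycle $C = x_0 x_1 \dots x_a$ is processed by \Cref{rr:pc} in $O(a)$ time, i.e.\ time linear in the number of vertices of $C$. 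Since the pending cycles are internally vertex-disjoint and every vertex of $G_1$ lies in at most one pending cycle, $\sum_{C \in \mathcal{C}} |C| \le n + |\mathcal{C}| = O(n)$ (the additive term accounts for the shared anchor $x_0$, and $|\mathcal{C}| \le |V^{\ge 3}| + O(1) = O(n)$ anyway). Hence exhaustively applying \Cref{rr:pc} costs $O(n)$ time in total.

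Finally I would observe that applying \Cref{rr:pc} can create new degree-one vertices only at the anchor $x_0$ if $\deg(x_0)$ was exactly two, so after one round of \Cref{rr:pc} we may need to re-apply \Cref{rr:deg-one}; but each such deletion again does $O(1)$ work and still respects the global "each vertex deleted once, each edge touched $O(1)$ times" accounting, so interleaving the two rules does not change the $O(n+m)$ bound. I would therefore present the argument as: run \Cref{rr:deg-one} to exhaustion ($O(n+m)$), then repeatedly pick a pending cycle and apply \Cref{rr:pc}, re-running \Cref{rr:deg-one} on any freshly exposed degree-one vertex, until no rule applies; by the disjointness/charging argument the total is $O(n+m)$. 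The main obstacle to state cleanly is the bookkeeping that a vertex is never reprocessed — i.e.\ that the interleaving of the two rules still admits a single global charging scheme — rather than anything mathematically deep; once the decomposition of \cite{BMKNN18} and the per-structure running times from \Cref{lem:rr2-correctness} are in hand, the rest is a standard amortization argument.
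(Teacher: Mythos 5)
Your proposal is correct and takes essentially the same route as the paper: both rest on the same amortization --- constant work per deleted vertex, each vertex deleted at most once --- with only a cosmetic difference in bookkeeping (you maintain a degree-one queue plus the path/cycle decomposition of \cite{BMKNN18}, while the paper keeps vertices in degree buckets and moves the surviving neighbour to the right bucket after each deletion). One trivial slip worth fixing: if the anchor $x_0$ of a pending cycle has degree exactly two, \Cref{rr:pc} leaves it isolated, not of degree one; the freshly exposed degree-one vertex arises when $\deg(x_0)=3$, which does not affect the charging argument.
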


\begin{proof}
Notice that we can sort all vertices by their degree in linear time using bucket sort.
Applying \Cref{rr:deg-one} or~\cref{rr:pc} takes constant time per deleted vertex.
After applying a reduction rule, we adjust the degree of the remaining vertex (either the unique neighbor of a degree-one vertex or the high-degree vertex in a pending cycle) in constant time by moving it to the appropriate bucket.
Note that applying \cref{rr:pc} can lead to a new vertex of degree one and an application of \cref{rr:deg-one} can lead to two maximal paths merging to either a longer maximal path or a pending cycle.
Since these cases can be detected in constant time and each vertex is only removed once, the overall running time to apply \cref{rr:deg-one,rr:pc} exhaustively is in~$O(n + m)$.
\end{proof}

We now present the algorithm that computes the maximum~$\dis^{\pen}(u,v)$ over all pairs of remaining vertices~$u,v$ after applying \cref{rr:deg-one,rr:pc} exhaustively.
This algorithm distinguishes between three different cases:
The longest shortest path has at least one endpoint in~$V^{\geq 3}$ (Case 1), its two endpoints are in the same maximal path (Case 2), or its endpoints are in two different maximal paths (Case 3).
 
{ 
\begin{proof}[of \cref{thm:fes}]
Let~$G=(V,E)$ be the input graph with \param{feedback edge number}~$k$ and let~$K$ be a feedback edge set with~$|K|=k$.

\emph{Case 1:}
First we perform Dijkstra's algorithm from each vertex~$v\in V^{\geq 3}$ and store for each vertex~$u \in V \setminus\{v\}$ the distance~$\dis(v,u)$ and update~$s = \max\{s, \pen(v) + \pen(u) + \dis(v,u)\}$.
This way we find all shortest paths that start or end in a vertex in~$V^{\geq 3}$ (or a pendant tree connected to such a vertex).

\emph{Case 2:}
This case is similar to the case of pending cycles (see \cref{rr:pc}).
The only adjustment is the computation of the index that is considered by~$x_{\ell+1}$ but not by~$x_\ell$.
For a maximal path~$P=x_0x_1\ldots x_a$, we compute~$W= \sum_{i=0}^{a-1} \dis(x_i,x_{i+1})$ and check whether the distance ``within'' a path between two vertices~$x_i,x_j$ ($i < j$) is at most as large as~$\dis(x_i,x_0) + \dis(x_0,x_a) + \dis(x_a,x_j)$.

\emph{Case 3:}
We set~$V_P := \{x_1,x_2,\ldots,x_{a-1}\}$ and~$\overline{V}_P := V \setminus (V_P \cup \{x_0,x_a\}) = \{v_1, v_2, \ldots, v_{n-a-1}\}$.
In the last case we have that~$u$ is in a maximal path~$P = x_0x_1 \ldots x_a$ and~$v$ is outside~$P$, that is,~$u \in V_P$ and~$v \in \overline{V}_P$.
We present an algorithm that takes~$O(n \log n)$ time for each maximal path to compute the length of a longest shortest path of the specified type.
As there are~$O(k)$ such maximal paths~\mbox{\cite[Lemma 5]{BMKNN18}}, the overall running time is~$O(k \cdot n\log n)$.

The algorithm uses a length-$|\overline{V}_P|$ array~$D$ where the~$i^\text{th}$ entry is the distance difference of~$v_i \in \overline{V}_P$ to~$x_0$ and~$x_a$ respectively, formally, $D[i] := \dis_G(x_0,v_i) - \dis_G(x_a,v_i)$.
Note that for some vertex~$x_j$ in~$P$, there is a shortest~$x_j$-$v_i$-path leaving~$P$ via~$x_a$ if and only if~$\dis_P(x_j,x_a) - \dis_P(x_j,x_0) \le D[i]$.
Furthermore,~$D$ can be computed in~$O(n)$ time from the distances computed in Case 1.
The values~$\dis_P(x_j,x_a)$ and $\dis_P(x_j,x_0)$ can also be computed easily in~$O(n)$ time.

We use~$D$ in the following way: 
The algorithm sorts~$D$ in~$O(n \log n)$ time in non-increasing order (for ease of notation, we still assume that the~$i^\text{th}$ entry of~$D$ correspond to~$v_i$).
As a result, we have that if a shortest~$x_j$-$v_i$-path leaves~$P$ via~$x_a$, then so does every shortest~$x_j$-$v_{i'}$-path for every~$i' < i$.
Furthermore, since for any~$j' > j$ we have~$\dis_P(x_{j'},x_a) - \dis_P(x_{j'},x_0) < {\dis_P(x_{j},x_a) - \dis_P(x_{j},x_0) \le D[i]}$, we have that every shortest~$x_{j'}$-$v_{i}$-path goes via~$x_0$.
See \cref{fig:monoton} for an illustration of this monotonicity which is exploited in our subsequent algorithm.

\newcommand{\DValueExample}{
	\node[vertex, label=above:$x_0$] at(1,0) (x0) {};
	\node[vertex, label=above:$x_1$] at(2,.15) (x1) {};
	\node[vertex, label=above:$x_2$] at(3,.225) (x2) {};
	\node[vertex, label=above:$x_3$] at(4,.15) (x3) {};
	\node[vertex, label=above:$x_4$] at(5,0) (x4) {};

	\node[vertex, label=below:$v_1$] at(1,-2.5) (v1) {};
	\node[vertex, label=below:$v_2$] at(3,-2.75) (v2) {};
	\node[vertex, label=below:$v_3$] at(5,-2.5) (v3) {};

	\draw (x0) -- (x1);
	\draw (x1) -- (x2);
	\draw (x2) -- (x3);
	\draw (x3) -- (x4);

	\draw[dashed] (x0) -- node[near end,label=left:$13$] {} (v1);
	\draw[dashed] (x4) -- node[near start,label=left:$10$] {} (v1);
	\draw[dashed] (x0) -- node[midway,label=left:$10$] {} (v2);
	\draw[dashed] (x4) -- node[midway,label=right:$10$] {} (v2);
	\draw[dashed] (x0) -- node[near start,label=right:$10$] {} (v3);
	\draw[dashed] (x4) -- node[near end,label=right:$13$] {} (v3);
}

\begin{figure}
	\centering
	\begin{tikzpicture}
		\begin{scope}
			\DValueExample
			
			\tikzstyle{edge} = [color=black,opacity=.15,line cap=round, line join=round, line width=12pt]
			\begin{pgfonlayer}{background}
				\draw[edge] (x1.center) -- (x2.center) -- (x3.center) -- (x4.center) -- (v1.center);
			\end{pgfonlayer}
			\node[] at (3.2,-3.4) {$\dis_P(x_1,x_4) - \dis_P(x_1,x_0) = 2 \le D[1]$};
		\end{scope}

		\begin{scope}[xshift=7cm]
			\DValueExample
			
			\tikzstyle{edge} = [color=black,opacity=.15,line cap=round, line join=round, line width=12pt]
			\begin{pgfonlayer}{background}
				\draw[edge] (x2.center) -- (x3.center) -- (x4.center) -- (v1.center);
				\draw[edge] (x2.center) -- (x3.center) -- (x4.center) -- (v2.center);
			\end{pgfonlayer}
			\node[] at (2.8,-3.4) {$\dis_P(x_2,x_4) - \dis_P(x_2,x_0) = 0 \le D[2]$};
		\end{scope}
		
		\node[] at (6.5,-2.7) {$D = [3, 0, -3]$};
	\end{tikzpicture}
	\caption{
		Example demonstrating the monotonicities used in the proof of \cref{thm:fes}.
		All weights that are not displayed are 1 and all pen weights are 0. 
		Observe that only for~$i=1$ a shortest $x_1$-$v_i$-path goes over~$x_4$ (see highlighted path on the left).
		The fact that a shortest $x_1$-$v_i$-path goes over~$x_4$ if and only if~$\dis_P(x_1,x_4) - \dis_P(x_1,x_0) \le D[i]$ can also be seen in the example: $D[2] < \dis_P(x_1,x_4) - \dis_P(x_1,x_0) = 3 - 1 \le D[1]$.
		Exchanging~$x_1$ with~$x_2$ as starting point, results in more shortest $x_2$-$v_i$-paths going over~$x_4$ (see the highlighted paths on the right with~$x_2$ as starting point).
	}
	\label{fig:monoton}
\end{figure}

The algorithm handles two cases separately: One for computing the longest shortest~$x_j$-$v_i$-path, $x_j\in V_P$ and~$v_i \in \overline{V}_P$, that contains~$x_0$ and one for computing longest shortest~$x_j$-$v_i$-path containing~$x_a$.
As these two cases are completely symmetric, we will discuss only the latter case.
For brevity, let~$\dis_{\max}(x_j)$ be the length of a longest shortest path starting in~$x_j$, leaving~$P$ via~$x_a$, and ending in some~$v \in \overline{V}_P$.
Formally, $\dis_{\max}(x_j) = \max \{\dis^{\pen}(x_j,v_i) \mid v_i \in \overline{V}_P \land \dis_G(x_j,v_i) = \dis_P(x_j,x_a) + \dis_G(x_a,v_i)\}$.
Thus, the task is to compute~$\max_{j \in [a-1]}\{\dis_{\max}(x_j)\}$.
To this end, the algorithm computes~$\dis_{\max}(x_j)$ for all~$j$.

For the initialization, the algorithm computes the sorted array~$D$.
Moreover, it computes the largest number~$i_1 \in [n-a-1]$ such that~$\dis_G(x_1,v_{i_1}) = \dis_P(x_1,x_a) + \dis_G(x_a,v_{i_1})$.
If no such number exists, then set~$i_1 := 0$.
Furthermore, for each~$i \in [i_1]$ compute~$\dis^{\pen}(x_1,v_i) = \pen(v_i) + \dis_G(v_i,x_a) + \dis_P(x_a,x_1) + \pen(x_1)$ and store the maximum in a variable~$r$ ($r$ will be returned at the end of the algorithm).
Due to~$D$ being sorted, this initialization phase can be done in~$O(i_1)$ time.
Moreover, due to~$D$ being sorted, we have~$r = \dis_{\max}(x_1)$ as for all~$i' > i_1$ every shortest~$x_1$-$v_{i'}$-path leaves~$P$ via~$x_0$.
This completes the initialization.

Next, the algorithm computes for each~$j \in \{2,3,\ldots,a-1\}$ the value~$\dis_{\max}(x_j)$.
Notice that~$\dis_{\max}(x_1)$ was computed in the initialization.
For~$j>1$ the algorithm is as follows:
Compute the largest number~$i_j \in [n-a-1]$ such that~$\dis_G(x_j,v_{i_j}) = \dis_P(x_j,x_a) + \dis_G(x_a,v_{i_j})$.
Note that due to the sorting of~$D$ we have that~$i_j \ge i_{j-1}$.
Hence, we find~$i_j$ in~$O(i_j - i_{j-1})$ time by simply start checking~$D$ at positions~$i_{j-1}+1, i_{j-1}+2, \ldots, i_{j}, i_{j}+1$ (note that, by definition of~$i_j$, the last check at position~$i_{j}+1$ fails).
For each~$i \in \{i_{j-1}+1,i_{j-1}+2,\ldots,i_{j}\}$ we do the following:
We first compute~$\dis^{\pen}(x_j,v_i) = \pen(v_i) + \dis_G(v_i,x_a) + \dis_P(x_a,x_j) + \pen(x_j)$ and store the maximum in a variable~$r'$.
We then update~$r$ with~$\max\{ r', r - \pen(x_{j-1}) + \pen(x_{j}) - \w(\{x_{j-1},x_j\}) \}$.
Observe that~$r = \dis_{\max}(x_j)$ as for~$v_i$ with~$i \in \{i_{j-1}+1,i_{j-1}+2,\ldots,i_{j}\}$ the algorithm computed~$\dis^{\pen}(x_j,v_i)$.
For all~$i \in [i_{j-1}]$ we know that all~$x_{j-1}$-$v_i$-paths leave~$P$ via~$x_a$.
Thus, we can simply update their length by~$\pen(x_{j}) - \pen(x_{j-1}) - \w(\{x_{j-1},x_j\})$.

Altogether, the algorithm runs in~$O(k (n \log n + \sum_{i=1}^{a-1} (i_j - i_{j-1}))) = O(kn \log n)$ time.
Combining this with \cref{lem:rr-runtime} concludes the proof of \cref{thm:fes}.
\end{proof}
}
}

\section{Parameters for Social Networks}
\label{sec:soc-networks}

Here, we study parameters that we expect to be small in social networks.
Recall that social networks have the ``small-world'' property and a power-law degree distribution~\cite{LH08,M67,New03,New10,NJ03}.  
The ``small-world'' property directly transfers to the \param{diameter}.
We capture the power-law degree distribution by the \param{$h$-index} as only few high-degree vertices exist in the network.
Thus, we investigate parameters related to the \param{diameter} and to the \param{$h$-index} starting with degree-related parameters. %

\subsection{Degree Related Parameters}

We next investigate the parameter \param{minimum degree}.
Unsurprisingly, the \param{minimum degree} is not helpful for parameterized algorithms.
In fact, we show that \di{} is $2$-GP-hard with respect to the combined parameter \param{bisection width and minimum degree}.
The \param{bisection width} of a graph~$G$ is the minimum number of edges to delete from~$G$ in order to partition~$G$ into two connected component whose number of vertices differ by at most one.

\begin{proposition}%
	\label{thm:BWisGPhard}
	\di{} is $2$-GP-hard with respect to \param{bisection width and minimum degree}.
\end{proposition}
\begin{proof}
Let $G=(V,E)$ be an arbitrary input graph for \di{} where $V = \{v_1,v_2,\ldots v_n\}$ and let~$d$ be the diameter of~$G$.
We construct a new graph $G'=(V',E')$ with diameter~$d+4$ as follows:
Let $V'= \{s_i, t_i, u_i \mid i \in [n]\} \cup \{w_i \mid i \in [3n]\}$ and $E'= T \cup W \cup E''$, where $T = \{\{s_i, t_i\}, \{t_i,u_i\} \mid i \in [n]\}, W = \{u_1, w_1\} \cup \{\{w_1, w_i\}\mid i \in ([3n]\setminus \{1\})\}$, and $E'' = \{\{u_i, u_j\}\mid \{v_i, v_j\} \in E\}$.

An example of this construction can be seen in \cref{fig:bisectiongp}.
\begin{figure}[t!]
	\centering
	\begin{tikzpicture}[scale=0.85]
		\node[vertex,label=right:$v_1$] (v1) at (-6,0) {};
		\node[vertex,label=right:$v_2$] (v2) at (-7,-1) {} edge (v1);
		\node[vertex,label=right:$v_3$] (v3) at (-6,-2) {} edge (v1) edge (v2);
		\node[vertex,label=right:$v_4$] (v4) at (-6,-3) {} edge (v3);

		\draw [->,decorate,decoration=snake] (-5,-1.5) -- (-4,-1.5);

		\node[vertex,label=right:$u_1$] (u1) at (0,0) {};
		\node[vertex,label=right:$u_2$] (u2) at (-1,-1) {} edge (u1);
		\node[vertex,label=right:$u_3$] (u3) at (0,-2) {} edge (u1) edge (u2);
		\node[vertex,label=right:$u_4$] (u4) at (0,-3) {} edge (u3);

		\node[vertex,label=above:$t_1$] (t1) at (-2,0) {} edge (u1);
		\node[vertex,label=above:$t_2$] (t2) at (-2,-1) {} edge (u2);
		\node[vertex,label=above:$t_3$] (t3) at (-2,-2) {} edge (u3);
		\node[vertex,label=above:$t_4$] (t4) at (-2,-3){} edge (u4);

		\node[vertex,label=above:$s_1$] (s1) at (-3,0) {} edge (t1);
		\node[vertex,label=above:$s_2$] (s2) at (-3,-1) {} edge (t2);
		\node[vertex,label=above:$s_3$] (s3) at (-3,-2) {} edge (t3);
		\node[vertex,label=above:$s_4$] (s4) at (-3,-3) {} edge (t4);;

		\def\radius{1.75}
		\def\n{12}
		\node[vertex,label=left:$w_1$] (w1) at (4,-1.5) {} edge (u1);
		\foreach \i in {2,...,\n} {
			\node[vertex] (w\i) at ({\radius * cos(360 * \i / \n - 15*(180 / \n)) + 4},{\radius * sin(360 * \i / \n - 15*(180 / \n)) - 1.5}) {} edge (w1);
			\node[] () at ({1.25 * \radius * cos(360 * \i / \n - 15*(180 / \n)) + 4},{1.25 *\radius * sin(360 * \i / \n - 15*(180 / \n)) - 1.5}) {$w_{\i}$};
		}

		\draw [decorate,decoration={brace,amplitude=10pt},xshift=-4pt,yshift=0pt] (2.2,1) -- (5.5,1);
		\node (W) at (3.75,2) {{$W$}};
		\draw [decorate,decoration={brace,amplitude=10pt},xshift=-4pt,yshift=0pt] (-3.2,1) -- (0.5,1);
		\node (W) at (-1.5,2) {{$T$}};
		\draw [decorate,decoration={brace,amplitude=10pt},xshift=-4pt,yshift=0pt] (0.6,-3.5) -- (-1.2,-3.5);
		\node (W) at (-0.5,-4.5) {{$E''$}};
		
		\tikzstyle{edge} = [color=black,opacity=.15,line cap=round, line join=round, line width=12pt]
			\begin{pgfonlayer}{background}
				\draw[edge] (v1.center) -- (v3.center) -- (v4.center);

				\draw[edge] (s4.center) -- (u4.center) -- (u3.center) -- (u1.center) -- (s1.center);
			\end{pgfonlayer}
	\end{tikzpicture}
	\caption{
		Example for the construction in the proof of \Cref{thm:BWisGPhard}. 
		The input graph given on the left side has diameter~$2$ and the constructed graph on the right side has diameter~$2+4 = 6$.
	}
	\label{fig:bisectiongp}
\end{figure}
We will now prove that all properties of \cref{def:k-GP-hard} hold.
It is easy to verify that the reduction runs in linear time and that there are~$6n$ vertices and~$5n + m $ edges in~$G'$.
Notice that~${\{s_i, t_i, u_i \mid i \in [n]\}}$ and~${\{w_i \mid i \in [3n]\}}$ are both of size~$3n$ and that there is only one edge ($\{u_1,w_1\}$) between these two sets of vertices.
The bisection width of~$G'$ is therefore one and the minimum degree is also one as~$s_1$ is only adjacent to~$t_1$.

It remains to show that~$G'$ has diameter~$d+4$.
First, notice that the subgraph of~$G'$ induced by~$\{u_i \mid i \in [n]\}$ is isomorphic to~$G$.
Note that~$\dis(s_i,u_i) = 2$ and thus~$\dis(s_i, s_j)= \dis(u_i, u_j) + 4 = \dis(v_i,v_j) + 4$ and therefore the diameter of~$G'$ is at least~$d+4$.
Third, notice that for all vertices~$x \in V' \setminus \{s_i\}$ it holds that~${\dis(s_i,x) > \dis(t_i,x)}$.
Lastly, observe that for all~$i \in [3n]$ and all vertices~${x \in V'}$ it holds that~${\dis(w_i, x) \leq \max\{\dis(s_1, x), 4\}}$.
Thus the longest shortest path in~$G'$ is between two vertices~$s_i,s_j$ and is of distance~$\dis(u_i,u_j) + 4 = \dis(v_i,v_j) + 4 \leq d + 4$.
\end{proof}

We mention in passing that the constructed graph in the proof of \cref{thm:BWisGPhard} contains the original graph as an induced subgraph and if the original graph is bipartite, then so is the constructed graph.
Thus, first applying the construction in the proof of \cref{thm:DtBisGPhard} (see also \cref{fig:bipartitegp}) and then the construction in the proof of \cref{thm:BWisGPhard} proves that \di{} is GP-hard even parameterized by the sum of \param{girth}, \param{bisection width}, \param{minimum degree}, and \dt{bipartite graphs}.

\subsection{Parameters related to both diameter and $h$-index}
\label{sec:comb}
Here, we will study combinations of two parameters where the first one is related to \param{diameter} and the second to \param{$h$-index} (see \Cref{fig:param-hierarchy} for an overview of closely related parameters).
We start with the combination \param{maximum degree and diameter}.
Interestingly, although the parameter is quite large, the naive algorithm behind \cref{obs:maxdeg-diam-alg} cannot be improved to a fully polynomial running time.

\begin{theorem}%
	\label{thm:maxdeg-diam}
	There is no~$(d+ \Delta)^{O(1)}(n+m)^{2-\epsilon}$-time algorithm that solves \di{} parameterized by \param{maximum degree~$\Delta$ and diameter~$d$} unless the SETH is false.
\end{theorem}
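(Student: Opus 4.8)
The goal is a SETH-based lower bound, so the natural route is a fine-grained reduction from either CNF-SAT or, more conveniently, from Orthogonal Vectors (OV), which is subquadratic-equivalent to \di{} by the framework behind the \citet{RW13} result. Recall that OV asks, given two families $A, B$ of $N$ binary vectors in $\{0,1\}^\ell$ with $\ell \in O(\log N)$, whether there exist $a \in A$, $b \in B$ with $\langle a, b\rangle = 0$. The plan is to build from such an instance an unweighted graph $G'$ whose number of vertices and edges is $\tilde O(N)$ (more precisely $N^{1+o(1)}$, using that $\ell$ is logarithmic), whose diameter $d$ is a constant (say $3$ or $4$) independent of $N$, and whose maximum degree $\Delta$ is polylogarithmic in $N$, i.e. $\Delta = (\log N)^{O(1)} = (\log n)^{O(1)}$ where $n = |V(G')|$. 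If additionally the diameter jumps by one exactly when an orthogonal pair exists, then a $(d+\Delta)^{O(1)}(n+m)^{2-\epsilon}$-time algorithm for \di{} would solve OV in $N^{1+o(1)} \cdot (\operatorname{polylog} N)^{O(1)} \cdot N^{2-\epsilon} = N^{2-\epsilon'}$ time, refuting the OV conjecture and hence the SETH.

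\textbf{The construction.} The standard OV-to-diameter gadget creates a vertex $\alpha_a$ for each $a \in A$, a vertex $\beta_b$ for each $b \in B$, a vertex $\gamma_i$ for each coordinate $i \in [\ell]$, and connects $\alpha_a$ to $\gamma_i$ iff $a[i] = 1$, and $\beta_b$ to $\gamma_i$ iff $b[i] = 1$; then $\alpha_a$ and $\beta_b$ are at distance $2$ iff they share a coordinate, i.e. iff $\langle a,b\rangle \neq 0$, and at larger distance otherwise. One then hangs a connected "backbone" on the coordinate vertices so that every pair of vertices \emph{except} the non-orthogonal $(\alpha_a,\beta_b)$ pairs is within distance $d_0$, making the diameter $d_0$ or $d_0+1$ according to the answer. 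The obstacle this creates for us is degree: the coordinate vertices $\gamma_i$ have degree up to $2N$, and the backbone hub has high degree, so naively $\Delta = \Theta(N)$. The key step — and the main difficulty — is to redesign the construction so that $\Delta$ stays polylogarithmic while $d$ stays constant. The idea is to replace each high-degree vertex by a bounded-degree tree (a \emph{balanced splitter tree}) of depth $O(\log N / \log\log N)$ and branching factor $(\log N)^{O(1)}$: instead of attaching all $\alpha_a$ with $a[i]=1$ directly to $\gamma_i$, attach them to the leaves of such a tree rooted at $\gamma_i$. This adds a constant $O(\log N/\log\log N)$ to all relevant distances, so by scaling the target distance $d_0$ up by the same constant (still a constant, since $\ell \in O(\log N)$ forces only $O(1/\log\log N) \cdot \log N$... — careful here: the tree depth must itself be a constant, so one takes branching factor $N^{o(1)}$, e.g. $2^{\sqrt{\log N}}$, giving depth $O(\sqrt{\log N})$, which is \emph{not} constant).

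\textbf{Resolving the degree-vs-diameter tension.} Since genuinely constant diameter with polylog degree is impossible when one coordinate vertex must "see" $\Omega(N)$ vector vertices, the right reading of the theorem is that $d$ and $\Delta$ need only be \emph{bounded by a function of the SETH parameter}, not absolute constants — but the statement as written demands the bound $(d+\Delta)^{O(1)}(n+m)^{2-\epsilon}$ be violated, so it suffices that $d+\Delta$ is \emph{so small} that $(d+\Delta)^{O(1)}$ is absorbed into $N^{o(1)}$. Concretely, I would take the splitter trees with branching factor $2^{\sqrt{\log N}}$ and depth $O(\sqrt{\log N})$, so that $\Delta \le 2^{\sqrt{\log N}} + O(1) = n^{o(1)}$ and $d = O(\sqrt{\log N}) = n^{o(1)}$. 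Then a $(d+\Delta)^{O(1)}(n+m)^{2-\epsilon}$-time algorithm runs in time $n^{o(1)} \cdot n^{2-\epsilon} = n^{2-\epsilon/2}$, and since $n = N^{1+o(1)}$ this is $N^{2-\epsilon/3}$, contradicting OV under SETH. The correctness argument is the routine part: verify (i) the reduction is computable in $N^{1+o(1)}$ time and outputs a connected graph; (ii) if there is an orthogonal pair, the diameter is exactly $d_0$ (the target), and otherwise it is $d_0 - 1$ or one less — i.e. a gap of $1$ — by a careful case analysis of distances between the three vertex types through the splitter-tree backbone; and (iii) $\Delta$ and $d$ satisfy the claimed $n^{o(1)}$ bounds.

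\textbf{Main obstacle.} The delicate point, and where I expect to spend the most care, is the distance analysis through the splitter trees: one must ensure that inserting trees of depth $t := O(\sqrt{\log N})$ everywhere uniformly shifts \emph{every} inter-type distance by the same additive amount, so that the orthogonality gap of $1$ is preserved and no "shortcut" path of unexpectedly small length appears between a non-orthogonal $(\alpha_a,\beta_b)$ pair. This typically requires making the backbone a single auxiliary vertex connected (again via a splitter tree) to the roots of all coordinate trees and to all of $A$- and $B$-vertices, so that the "default" distance between any two vertices is a fixed $2t + c$, while an orthogonal pair is forced to route the long way and land at $2t + c + 1$. Checking that this default distance is not accidentally achievable for non-orthogonal pairs via the coordinate trees (which would collapse the gap) is the crux; once the metric is pinned down, the SETH consequence via \cref{lem:gph}-style bookkeeping is immediate.
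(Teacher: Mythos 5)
Your overall strategy is the right one and matches the paper's at the bookkeeping level: produce a SETH-hard \di{} instance in which $d+\Delta$ is $(n+m)^{o(1)}$, so that the factor $(d+\Delta)^{O(1)}$ is absorbed and the hypothetical algorithm runs in $(n+m)^{2-\epsilon'}$ time. The difference lies in how that instance is obtained. The paper simply invokes the construction of \citet{ED16}, which already reduces CNF-SAT to \di{} on graphs of \emph{maximum degree three} whose diameter is $O(\log^{c}(n+m))$; with $\Delta=3$ and $d$ polylogarithmic one has $(d+\Delta)^{O(1)}=(n+m)^{o(1)}$ and the theorem follows in a few lines (the paper in fact gets the stronger exclusion of $2^{o(\sqrt[c]{d+\Delta})}\cdot(n+m)^{2-\epsilon}$-time algorithms). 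You instead propose to rebuild such an instance from scratch by splicing splitter trees of branching factor $2^{\sqrt{\log N}}$ and depth $O(\sqrt{\log N})$ into the orthogonal-vectors gadget underlying \citet{RW13}.

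The gap is that this rebuilding \emph{is} the hard part, and you have not done it: you explicitly defer what you yourself call the crux, namely verifying that replacing every high-degree vertex (the coordinate vertices \emph{and} the backbone hub) by depth-$t$ trees shifts all relevant distances by the same additive amount and preserves a diameter gap of exactly one. This is not routine. One must calibrate the backbone path length to be exactly one more than the coordinate-tree path length $2t+2$, and one must rule out shortcuts such as $\alpha_a\to(\text{leaf})\to\alpha_{a'}\to(\text{tree }j)\to\beta_b$ or paths that climb partway up a coordinate tree and cross over into the backbone; any such shortcut collapses the gap and the reduction fails. Carrying out this case analysis is essentially the technical content of the \citet{ED16} paper itself. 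So as written your proposal has the correct parameter arithmetic and correctly identifies the degree-versus-diameter tension, but the construction on which it rests is asserted rather than proved; to close the argument you should either cite \citet{ED16} directly (as the paper does) or supply the full distance analysis for your splitter-tree gadget.
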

{
\begin{proof}
We prove a slightly stronger statement excluding~$2^{o(\sqrt[c]{d + \Delta})}\cdot (n+m)^{2-\epsilon}$-time algorithms for some constant~$c$.
Assume towards a contradiction that for each constant~$r$ there is a~$2^{o(\sqrt[r]{d + \Delta})}\cdot (n+m)^{2-\epsilon}$-time algorithm that solves \di{} parameterized by \param{maximum degree}~$\Delta$ and \param{diameter}~$d$.
\citet{ED16} have shown a reduction from \textsc{CNF-SAT} to \di{} where the resulting graph has maximum degree three such that for any constant~$\epsilon>0$ an~$O((n+m)^{2-\epsilon})$-time algorithm (for \di) would refute the SETH.
A closer look reveals that there is some constant~$c$ such that the diameter~$d$ in their constructed graph is in~$O(\log^c (n+m))$.
By assumption we can solve \di{} parameterized by \param{maximum degree} and \param{diameter} in $2^{o(\sqrt[c]{d + \Delta})}\cdot (n+m)^{2-\epsilon}$ time.
Observe that
\begin{align*}
&\ 2^{o(\sqrt[c]{d + \Delta})}\cdot (n+m)^{2-\epsilon} = 2^{o(\sqrt[c]{\log^c (n+m)})}\cdot (n+m)^{2-\epsilon}\\
=&\ (n+m)^{o(1)}\cdot (n+m)^{2-\epsilon} \subseteq O((n+m)^{2-\epsilon'})\text{ for some }~\varepsilon'>0.
\end{align*}
Since we constructed for some~$\epsilon'>0$ an~$O((n+m)^{2-\epsilon'})$-time algorithm for \di{} the SETH fails and thus we reached a contradiction. Finally, notice that ${(d + \Delta)^{O(1)} \subset 2^{o(\sqrt[c]{d + \Delta})}}$ for any constant~$c$. 
\end{proof}
}

\paragraph{$h$-index and diameter.}
We next investigate the combined parameter \param{$h$-index} and \param{diameter}.
The reduction by \citet{RW13} produces instances with constant \param{domination number} and logarithmic \param{vertex cover number} (in the input size).
Since the \param{diameter}~$d$ is linearly upper-bounded by the \param{domination number} and the \param{$h$-index} is linearly upper-bounded by the \param{vertex cover number}, any algorithm that solves \di{} parameterized by the combined parameter~$(d+h)$ in~$2^{o(d+h)}\cdot (n+m)^{2-\epsilon}$ time disproves the SETH.
We will now present an algorithm for \wdi{} parameterized by \param{$h$-index} and \param{diameter} that almost matches the lower bound.

\begin{theorem}%
	\label{thm:hind-diam}
	\di{} parameterized by \param{diameter}~$d$ and \param{$h$-Index}~$h$ is solvable in~$O(h \cdot (n \log n + m) + n \cdot d \cdot h \cdot  (d^h + h^d \log h))$ time.
\end{theorem}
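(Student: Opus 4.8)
## Proof Proposal

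The plan is to exploit the two parameters as follows: the $h$-index $h$ bounds the number of high-degree vertices, and the diameter $d$ bounds the depth of any BFS tree and hence the number of distinct ``distance profiles'' a vertex can have. Let $H$ be the set of all vertices of degree at least $h$; by definition of the $h$-index we have $|H| \le h$. First I would run a BFS from each vertex of $H$; this costs $O(h \cdot (n+m))$ time and gives us, for every vertex $v \in V$, the vector $(\dis(v, x))_{x \in H}$. The longest shortest path either has an endpoint in $H$ (or is ``routed'' through $H$ in a way we can read off from these BFS trees), or it lies entirely in $G - H$, which has maximum degree at most $h - 1$.

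The key step is to handle the low-degree part $G - H$. Since $G$ has diameter $d$, so does the relevant neighborhood structure: for any vertex $v$, its distance to everything is at most $d$, so the distance vector of $v$ restricted to $H$ takes values in $\{0,1,\dots,d\}^{|H|}$, giving at most $(d+1)^h$ possible profiles. Two vertices $u, w$ with the same profile are interchangeable as ``anchors'' for any shortest path that passes through $H$, so it suffices to keep one representative per profile. For pairs of vertices whose shortest path avoids $H$ entirely, I would instead bound the number of relevant vertices directly: a vertex $v \in V(G - H)$ sees, within distance $d$ in $G - H$, at most $\sum_{i=0}^{d}(h-1)^i \le h^{d+1}$ vertices, and more to the point, the local ball of radius $d$ around $v$ in the bounded-degree graph $G - H$ has at most $h^{O(d)}$ vertices, so we can afford to explore it explicitly. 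Combining the profile bound $(d+1)^h = 2^{h \log d}$ with the local-ball bound $h^{O(d)} = 2^{O(d \log h)}$ yields the claimed factor $2^{h \log d + d \log h}$, and the leading $n \cdot d$ accounts for doing a bounded-depth search from each of the $n$ vertices while tracking $d$ distance layers.

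Concretely, the algorithm has three phases: (i) compute $H$ and run $h$ BFS's, giving all pairwise distances that are ``witnessed'' by an $H$-vertex, in $O(h(n+m))$ time; (ii) for each vertex $v$, explore its radius-$d$ ball in $G - H$ — which, being in a graph of maximum degree $< h$ and of radius $\le d$, has size $2^{O(d \log h)}$ — to recover exact distances to nearby low-degree vertices, at total cost $n \cdot 2^{O(d\log h)}$; (iii) for distances between far-apart low-degree vertices, every shortest path between them must cross $H$, so bucket all low-degree vertices by their $H$-distance-profile ($2^{h\log d}$ buckets), pick one representative per bucket, and compute $\max$ over pairs of buckets of (profile-distance) — this is where a shortest path is reconstructed as $\min_{x \in H}(\dis(u,x) + \dis(x,w))$, correct precisely because the path goes through $H$. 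Taking the maximum over all three phases gives the diameter.

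The main obstacle I expect is phase (iii): one must argue carefully that if the shortest $u$–$w$ path avoids $H$ then $u$ and $w$ are within distance $2^{O(d\log h)}$ of each other and hence both lie in the same explored ball from phase (ii), while if it crosses $H$ then picking bucket representatives does not change the computed distance — the subtlety being that a representative might happen to lie closer to the other endpoint via a path that does \emph{not} cross $H$, which would make the $\min_{x\in H}$ formula an overestimate of that representative's true distance but could still be a valid witness for the original pair; reconciling these two regimes (and making sure the $n \cdot d$ rather than $n^2$ factor survives the bookkeeping of merging the per-vertex ball searches with the per-profile computation) is the delicate part. Everything else is a routine combination of BFS, bucketing by distance vectors, and bounded-degree/bounded-radius ball-size estimates.
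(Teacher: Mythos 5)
You have assembled the right ingredients (the set $H$ of at most $h$ high-degree vertices, BFS from $H$, distance profiles with at most $d^h$ types, bounded-depth BFS in $G-H$), but the step you yourself flag as ``the delicate part'' is a genuine gap, and it is precisely the main idea of the proof. Both of your per-pair estimates are \emph{over}estimates of the true distance: $\min_{x\in H}(\dis(u,x)+\dis(x,w))\ge \dis_G(u,w)$, with equality only when some shortest $u$--$w$ path meets $H$, and $\dis_{G-H}(u,w)\ge \dis_G(u,w)$, with equality only when some shortest path avoids $H$. The correct identity is $\dis_G(u,w)=\min$ of the two quantities, and taking ``the maximum over all three phases'' is a max of maxima of overestimates rather than a max of minima, so your algorithm can output a value strictly larger than the diameter (e.g.\ two profile-bucket representatives that are adjacent in $G-H$ but both far from $H$). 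Your suggestion that an overestimate for the representative ``could still be a valid witness for the original pair'' does not hold in general: it is possible that \emph{every} pair drawn from two buckets has a short $H$-avoiding path, in which case the profile distance between those buckets witnesses nothing. Moreover the min cannot be computed bucketwise, because $\dis_{G-H}(u,w)$ is not a function of the two profiles, so picking one representative per bucket cannot be justified as stated.

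The paper closes exactly this hole with a threshold-plus-counting argument that you are missing. It maintains a candidate diameter $e$ (initialized to the largest distance seen from $H$, incremented at most $d$ times). For each vertex $v$ and each type $T$ it first checks in $O(h)$ time whether $\min_{x\in H}(\dis(v,x)+T(x))\le e$; if so, every vertex of type $T$ is within distance $e$ of $v$ by the triangle inequality and nothing more is needed. Only if this routed distance exceeds $e$ does it run a BFS from $v$ in $G-H$ to depth $e$ and \emph{count} the vertices of type $T$ it reaches, comparing against the stored total number of vertices of that type: if the counts match, all of them satisfy $\dis_{G-H}(v,w)\le e$ and hence $\dis_G(v,w)\le e$; if they do not, some $w$ of type $T$ has both estimates above $e$, which certifies $\dis_G(v,w)>e$ and triggers incrementing $e$. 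This decision procedure never compares overestimates against each other, it only certifies ``all pairs are at distance $\le e$'' or exhibits a pair at distance $>e$, and it is also what produces the $n\cdot d\cdot 2^{h\log d+d\log h}$ term ($d$ rounds, $n$ vertices, $d^h$ types, $h^e$ per truncated BFS). Without this mechanism your proposal is not a correct algorithm, so the proof is incomplete at its central point.
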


\begin{proof}
Let~$H = \{x_1,\ldots,x_h\}$ be a set of vertices such that all vertices in~$V \setminus H$ have degree at most~$h$ in~$G$.
Clearly,~$H$ can be computed in linear time.
We will describe a two-phase algorithm with the following basic idea:
In the first phase it performs Dijkstra's algorithm from each vertex~$v \in H$, stores the distance to each other vertex and uses this to compute the ``type'' of each vertex, that is, a characterization by the distance to each vertex in~$H$.
In the second phase it iteratively increases a value~$e$ and verifies whether there is a vertex pair of distance at least~$e+1$.
If at any point no vertex pair is found, then the diameter of~$G$ is~$e$.

The first phase is straight forward:
Execute Dijkstra's algorithm from each vertex~$v$ in~$H$ and store the distance from~$v$ to every other vertex~$w$ in a table.
Then iterate over each vertex~$w \in V \setminus H$ and compute a vector of length~$h$ where the $i$th entry represents the distance from~$w$ to~$x_i$.
Also store the number of vertices of each type containing at least one vertex.
Since the distance to any vertex is at most~$d$, there are at most~$d^h$ different types.
This first phase takes~$O(h \cdot (m + n \log n))$ time.

For the second phase, we initialize~$e$ with the largest distance found so far, that is, the maximum value stored in the table and compute~$G' = G - H$.
Iteratively check whether there is a pair of vertices in~$V \setminus H$ of distance at least~$e+1$ as follows.
We check for each vertex~$v \in V \setminus H$ whether there are types such that no vertex of one of these types can be reached by a path of length at most~$e$ passing through a vertex in~$H$.
This can be done by computing the sum of the two type-vectors in~$O(h)$ time and comparing the minimum entry in this sum with~$e$.
If all entries are larger than~$e$, then no shortest path from~$v$ to some vertex~$w$ of the respective type of length at most~$e$ can contain any vertex in~$H$.
Thus we compute Dijkstra's algorithm from~$v$ in~$G'$ up to depth~$e$\footnote{By ``up to depth~$e$'' we mean that we run Dijkstra's algorithm with the addition that whenever the distance to a vertex is at least~$e$, then we do not add it to the stack (or priority queue) and if the distance is larger then~$e$, then we do not update its distance to the source. Similar as in the proof of \cref{obs:maxdeg-diam-alg}, we can show that the number of vertices and edges considered by the algorithm are at most~$h^e+1$ and~$h^e$, respectively.} and count the number of vertices of the respective types we found.
If these numbers are equal to the total number of vertices of the respective types, then for all vertices~$w$ of these type it holds that~$\dis(v,w) \leq e$.
If the respective numbers do not match, then there is a vertex pair of distance at least~$e+1$, and we can therefore increase~$e$ by one and start the process again.

There are at most~$d$ iterations in which~$e$ is increased and the check is done.
In each iteration, we have to compute the sum of type vectors for each vertex and perform Dijkstra's algorithm up to depth at most~$d$ in~$G'$.
Recall that the maximum degree in~$G'$ is~$h$ and therefore computing Dijkstra's algorithm up to depth~$d$ takes~$O(h^d \cdot d \cdot \log h)$ time.
Since~$\sum_{e=1}^d h^e < h^{d+1}$ for~$h \geq 2$, the overall running time is in~$O(h \cdot (n \log n + m) + n \cdot d \cdot h \cdot  (d^h + h^d \log h))$.
\end{proof}

\paragraph{Acyclic chromatic number and domination number.}
We next analyze the parameterized complexity of \di{} parameterized by \param{acyclic chromatic number}~$a$ and \param{domination number}~$d$.
The \param{acyclic chromatic number} of a graph is the minimum number of colors needed to color each vertex with one of the given colors such that each subgraph induced by all vertices of one color is an independent set and each subgraph induced by all vertices of two colors is acyclic.
The \param{acyclic chromatic number} upper-bounds the average degree, and therefore the standard~$O(n \cdot m)$-time algorithm runs in~$O(n^2 \cdot a)$ time.
We will show that this is essentially the best one can hope for as we can exclude~$f(a,d) \cdot (n+m)^{2-\varepsilon}$-time algorithms 
assuming SETH.
Our result is based on the reduction by \citet{RW13} and is modified such that the \param{acyclic chromatic number} and \param{domination number} are both four in the resulting graph.

\begin{theorem}%
	\label{thm:acn-ds}
	There is no~$f(a,d)\cdot (n+m)^{2-\epsilon}$-time algorithm for any computable function~$f$ that solves \di{} parameterized by \param{acyclic chromatic number}~$a$ and \param{domination number}~$d$ unless the SETH is false.
\end{theorem}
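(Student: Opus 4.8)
The plan is to reduce from \textsc{CNF-SAT} via the reduction of \citet{RW13} (the same reduction that already gives the SETH lower bound for the combined parameter \param{$h$-index and domination number}) and then surgically modify the constructed graph so that both the \param{acyclic chromatic number} and the \param{domination number} drop to a constant — specifically four — while preserving the essential ``diameter gap'' that encodes satisfiability. Recall that the Roditty--Williams construction splits the variables into two halves, builds two vertex sets $A$ and $B$ of partial assignments (each of size roughly $2^{n/2}$), adds a vertex per clause, and wires things up so that the diameter is small (say $\le 2$) if and only if the formula is satisfiable, and is larger otherwise; moreover in that construction a constant number of ``hub'' vertices already dominate the whole graph, so the \param{domination number} is $O(1)$, and the \param{vertex cover number} — hence the \param{$h$-index} — is $O(\log(n+m))$. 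The issue for the acyclic chromatic number is that the assignment sets $A$ and $B$ (together with the hubs) can induce dense bipartite-like structure with large induced forests under any few-coloring, so $a$ is not bounded there.

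First I would recall or restate the exact gadget of \citet{RW13}, isolating the hub vertices $H$ (a constant-size set whose presence forces small domination number) and the two large independent sets $A,B$ of assignment-vertices plus the clause vertices. Second, I would replace each high-degree hub by a \emph{subdivided star} or a small \emph{tree of hubs}: instead of a single vertex adjacent to all of $A$ (or all of $B$, or all clause vertices), route the adjacencies through a path/tree of newly introduced degree-bounded ``relay'' vertices, or alternatively attach a bounded-depth balanced tree so that the removal of a constant number of color classes leaves only acyclic pieces. The key invariant to maintain is that distances between the relevant vertex pairs change by only an additive constant, so the ``satisfiable $\Rightarrow$ small diameter'' / ``unsatisfiable $\Rightarrow$ large diameter'' dichotomy survives, with the threshold merely shifted by a known constant. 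Third, I would argue the acyclic chromatic number is four: exhibit an explicit proper coloring of $G'$ with four colors in which the union of any two color classes induces a forest — here the trick is that $A$, $B$, the clause vertices, and the relay/hub tree each get their own (or shared) colors, and the only cycles in $G'$ pass through the relay trees in a controlled way, so deleting one color class from the relay trees breaks all cycles. Fourth, I would re-verify the domination number is four (or some small constant): the relay trees each have a ``center'' relay vertex that, together with at most a constant number of others, dominates $A\cup B\cup\{\text{clause vertices}\}$; since the number of such trees is constant, a constant-size dominating set remains.

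The main obstacle I expect is the acyclic-chromatic-number bound: making every induced cycle in $G'$ pass through a predictable, colorable bottleneck while simultaneously keeping the domination number small and not blowing up distances is a genuine balancing act, because shrinking domination number pushes toward having high-degree hubs (bad for $a$) whereas shrinking $a$ pushes toward subdividing everything (bad for domination number and possibly for the diameter bookkeeping). Concretely, I would need to check that after subdividing the hub-to-$A$ and hub-to-$B$ edges, two assignment vertices that ``agree'' on the shared variables still have distance exactly the intended constant (and ``disagreeing'' ones strictly more), and that no new short path is created through a relay tree that would collapse the gap. I would also need to double-check that the reduction still runs in linear time and produces an instance of size $O(2^{n/2}\cdot\poly)$ so that a hypothetical $f(a,d)\cdot(n+m)^{2-\varepsilon}$ algorithm yields a $2^{(1-\delta)n}$ algorithm for \textsc{CNF-SAT}, contradicting the SETH; since $a$ and $d$ are now constants, $f(a,d)$ is a constant and the contradiction goes through exactly as in the proof of \Cref{thm:maxdeg-diam}. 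The bookkeeping for the coloring and the distance shifts is routine once the gadget is fixed, so the heart of the argument is choosing the right replacement gadget for the hubs.
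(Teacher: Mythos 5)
Your overall strategy --- start from the \citet{RW13} reduction and modify it so that both parameters become constant --- is the right one, and you correctly flag the central difficulty. But you do not resolve it, and the specific gadget you propose would not work. Subdividing the hub-to-$A$ and hub-to-$B$ stars destroys the constant domination number: once a hub is no longer adjacent to all of $A$, each of the $2^{|W|/2}$ assignment vertices must be dominated by its own relay vertex (or by itself), so no constant-size dominating set survives. More importantly, the hubs are not where the acyclic chromatic number blows up. A single high-degree vertex attached to an independent set is harmless for acyclic colorings (it induces a star); the real obstruction is the dense bipartite incidence structure between assignment vertices and clause vertices, which is full of induced $C_4$'s and therefore forces many colors no matter how the hubs are handled.

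The paper's fix is essentially the opposite of yours: it keeps four high-degree hubs $t_1,t_2,t_3,t_4$ (a path, with $t_1$ adjacent to all of $V_1$, $t_4$ to all of $V_2$, and $t_2,t_3$ adjacent to the clause set $B$ and to the subdivision vertices) as the dominating set, and instead subdivides every assignment--clause incidence edge, introducing a vertex $s_{ij}$ (resp.\ $q_{ij}$) on each such edge. The subdivision kills the $4$-cycles of the incidence structure, and the remaining $4$-cycles through $B$ created by $t_2$ and $t_3$ are broken by placing $t_2$ and $t_3$ in different color classes; the partition $V_1\cup V_2\cup B$, $S_1\cup S_2\cup\{t_1,t_2\}$, $\{t_3\}$, $\{t_4\}$ then witnesses acyclic chromatic number four. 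The subdivision also shifts the diameter threshold to $5$ versus $4$ (and note the direction: the formula is satisfiable if and only if the diameter is \emph{large}, not small as you wrote). Since you explicitly leave the choice of gadget open and the one you sketch fails on both counts, the heart of the argument is missing.
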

{
\begin{proof}
We provide a reduction from \textsc{CNF-SAT} to \di{} where the input instance has constant \param{acyclic chromatic number} and \param{domination number} and such that an~$O((n+m)^{2-\varepsilon})$-time algorithm refutes the SETH.
Since the idea is the same as in \citet{RW13} we refer the reader to their work for more details.
Let~$\phi$ be a \textsc{CNF-SAT} instance with variable set~$W$ and clause set~$C$.
Assume without loss of generality that~$|W|$ is even.
We construct an instance~$(G=(V,E), k)$ for \di{} as follows:

Randomly partition~$W$ into two set~$W_1,W_2$ of equal size.
Add three sets~$V_1,V_2$ and~$B$ of vertices to~$G$ where each vertex in~$V_1$ (in~$V_2$) represents one of~$2^{|W_1|} = 2^{|W_2|}$ possible assignments of the variables in~$W_1$ (in~$W_2$) and each vertex in~$B$ represents a clause in~$C$.
Clearly~$|V_1| + |V_2| = 2 \cdot 2^{|W|/2}$ and~$|B| = |C|$.
For each~${v_i\in V_1}$ and each~$u_j\in B$ we add a new vertex~$s_{ij}$ and the two edges~$\{v_i,s_{ij}\}$ and~$\{u_j,s_{ij}\}$ to~$G$ if the respective variable assignment does \emph{not} satisfy the respective clause.
We call the set of all these newly introduced vertices~$S_1$.
Now repeat the process for all vertices~$w_i\in V_2$ and all~$u_j$ in~$B$ and call the newly introduced vertices~$q_{ij}$ and the set~$S_2$.
Finally we add four new vertices~$t_1,t_2,t_3,t_4$ and the following sets of edges to~$G$:
$\{\{t_1,v\} \mid v\in V_1\}, \{\{t_2,s\} \mid s\in S_1\}, \{\{t_3,q\} \mid q\in S_2\}, \{\{t_4,w\} \mid w\in V_2\}, \{\{t_2,b\},\{t_3,b\} \mid b \in B\}$, and~$\{\{t_1,t_2\},\{t_2,t_3\},\{t_3,t_4\}\}$.
See \Cref{fig:acndom} for a schematic illustration of the construction. 

\begin{figure}[t]
	\centering
	\begin{tikzpicture}
	\node[vertex, color=colViolet,label=above:$u_1$] (u1) at (0,0) {};
	\node[vertex, color=colViolet,label=above:$u_2$] (u2) at (0,-1)  {};
	\node[vertex, color=colViolet,label=above:$u_3$] (u3) at (0,-2)  {};
	\node[] (udots) at (0,-3)  {\large{$\vdots$}};
	\node[vertex, color=colViolet,label=above:$u_m$] (un) at (0,-4)  {};

	\node[vertex, color=colGreen,label=above:$v_1$] (v1) at (-5,0) {};
	\node[vertex, color=colGreen,label=above:$v_2$] (v2) at (-5,-1)  {};
	\node[vertex, color=colGreen,label=above:$v_3$] (v3) at (-5,-2)  {};
	\node[vertex, color=colGreen,label=above:$v_4$] (v4) at (-5,-3)  {};
	\node[] (vdots) at (-5,-4)  {\large{$\vdots$}};
	\node[vertex, color=colGreen,label=above:$v_{2^{n/2}}$] (vn) at (-5,-5)  {};

	\node[vertex, color=colGreen,label=above:$w_1$] (w1) at (5,0) {};
	\node[vertex, color=colGreen,label=above:$w_2$] (w2) at (5,-1)  {};
	\node[vertex, color=colGreen,label=above:$w_3$] (w3) at (5,-2)  {};
	\node[vertex, color=colGreen,label=above:$w_4$] (w4) at (5,-3)  {};
	\node[] (wdots) at (5,-4)  {\large{$\vdots$}};
	\node[vertex, color=colGreen,label=above:$w_{2^{n/2}}$] (wn) at (5,-5)  {};

	\node[vertex, color=colRed,label=above:$s_{11}$] (s1) at (-2.5,0) {};
	\node[vertex, color=colRed,label=above:$s_{12}$] (s2) at (-2.5,-1)  {};
	\node[vertex, color=colRed,label=above:$s_{22}$] (s3) at (-2.5,-2)  {};
	\node[vertex, color=colRed,label=above:$s_{3m}$] (s4) at (-2.5,-3)  {};
	\node[vertex, color=colRed,label=above:$s_{41}$] (sn) at (-2.5,-4)  {};
	\node[] (sdots) at (-2.5,-5)  {\large{$\vdots$}};

	\node[vertex, color=colRed,label=above:$q_{1m}$] (q1) at (2.5,0) {};
	\node[vertex, color=colRed,label=above:$q_{22}$] (q2) at (2.5,-1)  {};
	\node[vertex, color=colRed,label=above:$q_{2m}$] (q3) at (2.5,-2)  {};
	\node[vertex, color=colRed,label=above:$q_{31}$] (q4) at (2.5,-3)  {};
	\node[vertex, color=colRed,label=above:$q_{43}$] (qn) at (2.5,-4)  {};
	\node[] (qdots) at (2.5,-5)  {\large{$\vdots$}};
	
	\node[ellipse, draw, minimum height=6cm, minimum width=1.2cm, color=colViolet] (B) at(0,-2) {};
	\node[color=colViolet] () at(0,1.5) {\large $B$};
	\node[ellipse, draw, minimum height=7cm, minimum width=1.2cm, color=colGreen] (V1) at(-5,-2.5) {};
	\node[color=colGreen] () at(-5,1.5) {\large $V_1$};
	\node[ellipse, draw, minimum height=7cm, minimum width=1.2cm, color=colGreen] (V2) at(5,-2.5) {};
	\node[color=colGreen] () at(5,1.5) {\large $V_2$};
	\node[ellipse, draw, minimum height=7cm, minimum width=1.2cm, color=colRed] (S1) at(-2.5,-2.5) {};
	\node[color=colRed] () at(-2.5,1.5) {\large $S_1$};
	\node[ellipse, draw, minimum height=7cm, minimum width=1.2cm, color=colRed] (S2) at(2.5,-2.5) {};
	\node[color=colRed] () at(2.5,1.5) {\large $S_2$};
	
	\node[vertex, color=colRed,label=below:$t_1$] at (-2.5,-6.4) (t1) {};
	\node[vertex, color=blue,label=below:$t_2$]   at (-1.25,-5.2) (t2) {};
	\node[vertex, color=brown,label=below:$t_3$]  at (1.25, -5.2) (t3) {};
	\node[vertex, color=colRed,label=below:$t_4$] at (2.5, -6.4) (t4) {};
	
	\draw (v1) to (s1);
	\draw (v1) to (s2);
	\draw (v2) to (s3);
	\draw (v3) to (s4);
	\draw (v4) to (sn);
	
	\draw (s1) to (u1);
	\draw (s2) to (u2);
	\draw (s3) to (u2);
	\draw (s4) to (un);
	\draw (sn) to (u1);
	
	\draw (un) to (q1);
	\draw (u2) to (q2);
	\draw (un) to (q3);
	\draw (u1) to (q4);
	\draw (u3) to (qn);
	
	\draw (q1) to (w1);
	\draw (q2) to (w2);
	\draw (q3) to (w2);
	\draw (q4) to (w3);
	\draw (qn) to (w4);
	
	\draw[dotted] (v1) to (t1);
	\draw[dotted] (v2) to (t1);
	\draw[dotted] (v3) to (t1);
	\draw[dotted] (v4) to (t1);
	\draw[dotted] (vn) to (t1);
	
	\draw[dotted] (s1) to (t2);
	\draw[dotted] (s2) to (t2);
	\draw[dotted] (s3) to (t2);
	\draw[dotted] (s4) to (t2);
	\draw[dotted] (sn) to (t2);
	
	\draw[dotted] (u1) to (t2);
	\draw[dotted] (u2) to (t2);
	\draw[dotted] (u3) to (t2);
	\draw[dotted] (un) to (t2);
	
	\draw[dotted] (u1) to (t3);
	\draw[dotted] (u2) to (t3);
	\draw[dotted] (u3) to (t3);
	\draw[dotted] (un) to (t3);
	
	\draw[dotted] (q1) to (t3);
	\draw[dotted] (q2) to (t3);
	\draw[dotted] (q3) to (t3);
	\draw[dotted] (q4) to (t3);
	\draw[dotted] (qn) to (t3);

	\draw[dotted] (w1) to (t4);
	\draw[dotted] (w2) to (t4);
	\draw[dotted] (w3) to (t4);
	\draw[dotted] (w4) to (t4);
	\draw[dotted] (wn) to (t4);
	
	\draw[dotted] (t1) to (t2);
	\draw[dotted] (t2) to (t3);
	\draw[dotted] (t3) to (t4);
	\end{tikzpicture}
	\caption{
		A schematic illustration of the construction in the proof of \Cref{thm:acn-ds}. 
		Note that the resulting graph has \param{acyclic chromatic number} five ($V_1 \cup V_2, B, S_1 \cup S_2 \cup \{t_1,t_4\}, \{t_2\}$ and~$\{t_3\}$, also represented by colors) and a \param{dominating number} four ($\{t_1,t_2,t_3,t_4\}$).
		}
	\label{fig:acndom}
\end{figure}

We will first show that~$\phi$ is satisfiable if and only if~$G$ has diameter five and then show that the \param{domination number} and \param{acyclic chromatic number} of~$G$ are five and four, respectively.
First assume that~$\phi$ is satisfiable.
Then, there exists some assignment~$\beta$ of the variables such that all clauses are satisfied, that is, the two assignments of~$\beta$ with respect to the variables in~$W_1$ and~$W_2$ satisfy all clauses.
Let~${v_1\in V_1}$ and~$v_2\in V_2$ be the vertices corresponding to~$\beta$. 
Thus for each~$b\in B$ we have~$\dis(v_1,b) + \dis(v_2,b) \geq 5$.
Observe that all paths from a vertex in~$V_1$ to a vertex in~$V_2$ that do not pass a vertex in~$B$ pass through~$t_2$ and~$t_3$.
Since all of these paths are of length~$5$, it follows that~$\dis(v_1,v_2) = 5$.
Observe that the diameter of~$G$ is at most five since each vertex is connected to some vertex in~$\{t_1,t_2,t_3,t_4\}$ and these four are of pairwise distance at most three.

Assume next that the diameter of~$G$ is five.
Clearly there is a shortest path between a vertex~$v_i\in V_1$ and~$v_j\in V_2$ of length five.
Thus there is no path of the form~$v_is_{ih}u_hq_{jh}w_j$ for any~$u_h \in B$.
This corresponds to the statement that the variable assignment of~$v_i$ and~$w_j$ satisfy all clauses and therefore~$\phi$ is satisfiable.

The domination number of~$G$ is four since~$\{t_1,t_2,t_3,t_4\}$ is a dominating set.
The acyclic chromatic number of~$G$ is at most five as~$V_1 \cup V_2, B, S_1 \cup S_2 \cup \{t_1,t_4\}, \{t_2\}$ and~$\{t_3\}$ each induce an independent set and each combination of them not including~$S_1 \cup S_2 \cup \{t_1,t_4\}$ only induce independent sets or stars.
Lastly, note that~$S_1 \cup S_2 \cup \{t_1,t_4\}$ and~$\{t_2\}$ or~$\{t_3\}$ only induces a star and an independent set,~$S_1 \cup S_2 \cup \{t_1,t_4\}$ and~$V_1 \cup V_2$ induces two trees of depth~$2$ (where~$t_1$ and~$t_4$ are the roots and~$S_1$ and~$S_2$ are the leaves), and~$S_1 \cup S_2 \cup \{t_1,t_4\}$ and $B$ induce a disjoint union of stars and isolated vertices as each vertex in~$S_1 \cup S_2 \cup \{t_1,t_4\}$ has maximum degree one in~$G[B \cup S_1 \cup S_2 \cup \{t_1,t_4\}]$.

Now assume that we have an~$O(f(k)\cdot (n+m)^{2-\epsilon})$-time algorithm for \di{} parameterized by \param{domination number} and \param{acyclic chromatic number}.
Since the constructed graph has~$O(2^{|W|/2} \cdot |C|)$ vertices and edges, this would imply an algorithm with running time
\begin{align*}
&\ O(f(9)\cdot (2^{|W|/2} \cdot |C|)^{2-\epsilon})\\
=&\ O(2^{(|W|/2) (2-\epsilon)} \cdot |C|^{(2-\epsilon)})\\
=&\ O(2^{|W| (1 - \epsilon / 2)} \cdot |C|^{(2-\epsilon)}) \\
=&\ 2^{|W| (1 - \epsilon')} \cdot (|C| + |W|)^{O(1)}\text{ for some }~\varepsilon'>0.
\end{align*}
Hence, such an algorithm for \di{} would refute the SETH.
\end{proof}
}

\section{Conclusion}
We have resolved the complexity status of \di{} for most of the parameters in the complexity landscape shown in \Cref{fig:param-hierarchy}.
However, several open questions remain.
For example, is there an $f(k)n^2$-time algorithm with respect to the parameter \param{diameter}?
Moreover, our algorithms working with parameter combinations have mostly impractical running times which, assuming SETH, cannot be improved by much.
So the question arises, whether there are parameters~$k_1, \ldots, k_\ell$ that allow for practically relevant running times like~$\prod_{i=1}^{\ell} k_i \cdot (n+m)$ or even~$(n+m) \cdot \sum_{i=1}^{\ell} k_i$?
The list of parameters displayed in \Cref{fig:param-hierarchy} is by no means exhaustive.
Hence, the question arises which other parameters are small in typical scenarios? 
For example, what is a good parameter capturing the special community structures of social networks~\cite{GN02a}?

\bibliographystyle{plainnat}
\bibliography{bib-diameter}

\end{document}